
\documentclass[11pt]{article}
\usepackage{graphics}
\usepackage{graphicx}
\usepackage{url}
\usepackage{alltt}
\usepackage{a4wide}
\usepackage{amssymb}
\usepackage{amsmath}
\usepackage{latexsym}
\usepackage{xspace}
\usepackage{epsfig}
\usepackage{amssymb}
\usepackage{color}
\usepackage{appendix}
\usepackage{textcomp} 

\newcommand{\strS}{{\mathcal S}\xspace}


\vfuzz2pt 
\hfuzz2pt 


\newcommand{\ub}{\,\mbox{$\bullet$}\,}

\newcommand{\op}{\,\mbox{\bf\texttt{(}}\,}
\newcommand{\cp}{\,\mbox{\bf\texttt{)}}\,}

\chardef\other=12

\def\mmakeactive#1{\catcode`#1=\active\ignorespaces}

{
\mmakeactive\
\gdef\obeywhitespace{%
  \mmakeactive\^^M %
  \let^^M=\NewLine %
  \aftergroup\removebox %
  \obeyspaces %
}}

\def\NewLine{\par\indent}
\def\removebox{\setbox0=\lastbox}

\def\|{|}

\newtheorem{proposition}{Proposition}

\newtheorem{theorem}{Theorem}

\newtheorem{openproblem}{Open Problem}

\def\qed{\hfill\rule{2mm}{2mm}}

\newenvironment{proof}{\noindent{\bf Proof.}~}{\hfill\qed \vskip 5pt}

\begin{document}

\title{\bf Asymptotics of Canonical and Saturated RNA Secondary Structures}

\author{
Peter Clote\footnotemark[1]
\and
Evangelos Kranakis\footnotemark[2]
\and
Danny Krizanc\footnotemark[3]
\and
Bruno Salvy\footnotemark[4]
}

\def\qed{\hfill\rule{2mm}{2mm}}
\newcommand{\ep}{\qed}
\newcommand{\pf}{{\sc Proof }}
\newcommand{\pfo}{{\sc Proof (Outline) }}
\newcommand{\cA}{{\cal A}}
\newcommand{\cI}{{\cal I}}
\newcommand{\cP}{{\cal P}}
\newcommand{\cS}{{\cal S}}
\newcommand{\cC}{{\cal C}}
\newcommand{\cR}{{\cal R}}
\newcommand{\cG}{{\cal R}}
\newcommand{\cZ}{{\cal Z}}
\newcommand{\ex}{{\bf E}}

\maketitle

\def\thefootnote{\fnsymbol{footnote}}

\footnotetext[1]{Department of Biology,
Boston College, Chestnut Hill, MA 02467, USA.\hfill\break
Research partially supported by National Science Foundation
Grants DBI-0543506, DMS-0817971, and the RNA Ontology Consortium.
Additional support is gratefully acknowledged to
the Foundation Digiteo-Triangle
de la Physique and to Deutscher Akademischer Austauschdienst. {\tt clote@bc.edu}}
\footnotetext[2]{School of Computer Science, Carleton University,
K1S 5B6, Ottawa, Ontario, Canada.
Research supported in part by
Natural Sciences and Engineering Research Council
of Canada (NSERC) and Mathematics of Information Technology and Complex
Systems (MITACS). {\tt evankranakis@gmail.com}}
\footnotetext[3]{Department of Mathematics,
Wesleyan University, Middletown CT 06459, USA. {\tt dkrizanc@wesleyan.edu}}
\footnotetext[4]{Algorithms Project, Inria Paris-Rocquencourt, France. Supported in part by the Microsoft Research-Inria Joint Centre. {\tt Bruno.Salvy@inria.fr}}

\begin{abstract}
It is a classical result of Stein and Waterman 
that the asymptotic number of RNA secondary structures  is
$1.104366 \cdot n^{-3/2} \cdot 2.618034^n$.
In this paper, we study combinatorial asymptotics for two special subclasses
of RNA secondary structures -- {\em canonical} and {\em saturated} structures.
Canonical secondary structures are defined to
have no lonely (isolated) base pairs.
This class of secondary structures was introduced 
by Bompf{\"u}newerer et al., who noted
that the run time of Vienna RNA Package is substantially reduced
when restricting computations to canonical structures.
Here we provide an explanation for the speed-up, by 
proving that the asymptotic number of canonical RNA secondary structures
is $2.1614 \cdot n^{-3/2} \cdot 1.96798^n$ and that the 
expected number of base pairs in a canonical secondary structure
is $0.31724 \cdot n$. The asymptotic number of canonical secondary
structures was obtained much
earlier by Hofacker, Schuster and Stadler using a different method.

Saturated secondary structures have the property that
no base pairs can be added without violating the
definition of secondary structure (i.e. introducing a pseudoknot or
base triple).  Here we show that the asymptotic number
of saturated structures  is
$1.07427\cdot n^{-3/2} \cdot 2.35467^n$, the asymptotic expected number
of base pairs is $0.337361 \cdot n$, and
the asymptotic number of saturated stem-loop structures is
$0.323954 \cdot 1.69562^n$, in contrast to
the number $2^{n-2}$ of (arbitrary) stem-loop structures as 
classically computed by Stein and Waterman. 
Finally, we apply work of Drmota \cite{drmota1994,drmota} to
show that the density of states for [all resp. canonical resp. 
saturated] secondary structures is asymptotically Gaussian.  
We introduce
a stochastic greedy method to sample random saturated structures,
called {\em quasi-random saturated structures}, and show that
the expected number of base pairs of is $0.340633 \cdot n$.
\end{abstract}

\section{Introduction}

Imagine an undirected\footnote{We often describe the graph edges
of an undirected graph as $(i,j)$, where $i<j$, rather than
$\{ i,j \}$.}  graph, described by placing  graph 
vertices $1,\ldots,n$ along the periphery of a circle 
in a counter-clockwise manner, and placing graph edges as
chords within the circle. An {\em outerplanar} graph is a graph
whose circular representation is planar; i.e. there are
no crossings.
An RNA secondary structure, formally defined in Section \ref{section:DSV},
is an outerplanar graph (no pseudoknots)
with the property that no vertex is incident to more than
one edge (no base triples) and that for every chord between vertices
$i,j$, there exist at least $\theta=1$  many vertices that are not
incident to any edge (hairpin requirement).
RNA secondary structure is equivalently defined to be
a well-balanced parenthesis expression $s_1,\ldots,s_n$
with dots, where if nucleotide $i$ is 
unpaired then $s_i = \bullet$, while if there is a base pair
between nucleotides $i<j$ then $s_i = \op$ and $s_j = \cp$. This latter
representation is known as the {\em Vienna representation} or
{\em dot bracket notation} (dbn).

Formally, a well-balanced parenthesis expression
$w_1 \cdots w_n$ can be defined as follows.
If $\Sigma$ denotes a finite alphabet, and $\alpha \in \Sigma$,
and $w=w_1 \cdots w_n \in \Sigma^*$ is
an arbitrary {\em word}, or sequence of characters drawn from $\Sigma$,
then $|w|_{\alpha}$ designates the number of occurrences of $\alpha$ in $w$.
Letting $\Sigma = \{ \op,\cp \}$, a word $w = w_1 \cdots w_n  \in \Sigma^*$ is
{\em well-balanced} if for all $1 \leq i < n$,
$|w_1 \cdots w_i|_{\op} \geq |w_1 \cdots w_i|_{\cp}$ and
$|w_1 \cdots w_n|_{\op} = |w_1 \cdots w_n|_{\cp}$.  
Finally, when considering
RNA secondary structures, we consider instead the alphabet $\Sigma = \{ \op,
\cp, \ub\}$, but otherwise the definition of well-balanced expression
remains unchanged.
The number of well-balanced parenthesis expressions of length $n$ over
the alphabet $\Sigma = \{ \op,\cp \}$ is
known as the Catalan number $C_n$, while that over
the alphabet $\Sigma = \{ \op,\cp,\ub \}$ is known as the
Motzkin number $M_n$ \cite{motzkin}.  
Stein and Waterman \cite{steinWaterman} 
computed the number $S_n$ of well-balanced parenthesis expressions
in the alphabet $\Sigma=\{ \op,\cp,\ub \}$, where there exist at least
$\theta=1$ occurrences of $\ub$ between corresponding left and
right parentheses $\op$ respectively $\cp$. It follows that $S_n$ is exactly
the number of RNA secondary structures on $[1,n]$, where
there exist at least $\theta=1$ unpaired bases in every hairpin loop.

In this paper, we are interested in specific classes of secondary
structure: {\em canonical} and {\em saturated} structures.
A secondary structure is canonical \cite{Bompfunewerer.jmb08}
if it has no lonely (isolated) base pairs. A secondary structure is
saturated \cite{zuker:saturated}
if no base pairs can be added without violating the
notion of secondary structure, formally defined in 
Section \ref{section:DSV}.  In order to compute parameters like 
asymptotic value for number of structures,
expected number of base pairs, etc. throughout this paper, we adopt the 
model of Stein and Waterman \cite{steinWaterman}.
In this model, any position (nucleotide, also known as base) 
can pair with any other position, and 
every hairpin loop must contain at least $\theta=1$
unpaired bases; i.e. if $i,j$ are paired, then $j-i>\theta$.
This latter condition is due to steric constraints for RNA. At the risk of
additional effort, the combinatorial methods of this paper could be applied
to handle the situation of most
secondary structure software, which set $\theta=3$.

\subsection{Examples of secondary structure representations}

Figure~\ref{fig:rna1}
gives equivalent views of the secondary structure of 5S ribosomal
RNA with GenBank accession number NC\_000909 
of the methane-generating archaebacterium
{\em Methanocaldococcus jannaschii}, as determined by
comparative sequence analysis and taken from the {\em 5S 
Ribosomal RNA Database} \cite{Szymanski.nar00}
located at \url{http://rose.man.poznan.pl/5SData/}.
The sequence and its secondary structure in (Vienna) dot bracket 
notation are as follows:
\begin{quote}
\begin{tiny}
\verb?UGGUACGGCGGUCAUAGCGGGGGGGCCACACCCGAACCCAUCCCGAACUCGGAAGUUAAGCCCCCCAGCGAUGCCCCGAGUACUGCCAUCUGGCGGGAAAGGGGCGACGCCGCCGGCCAC?\\
\verb?((((.(((((((....(((((((......((((((.............))))..)).....))))).))...(((((.....(((((....)))))....)))))...))))))))))).?
\end{tiny}
\end{quote}

Equivalent representations for the same secondary structure may be 
produced by software {\tt jViz} \cite{Wiese.itn05}, as depicted in 
Figure \ref{fig:rna1}. The left panel of this figure depicts the
{\em circular Feynman diagram} (i.e. outerplanar graph representation),
the middle panel depicts the
{\em linear Feynman diagram}, and the right panel depicts
the {\em classical} representation. This latter representation,
most familiar to biologists, may also be obtained by {\tt RNAplot}
from the Vienna RNA Package \cite{hofacker:ViennaWebServer}.

\begin{figure}[htbp]
     \centering
\includegraphics[width=0.3\linewidth]{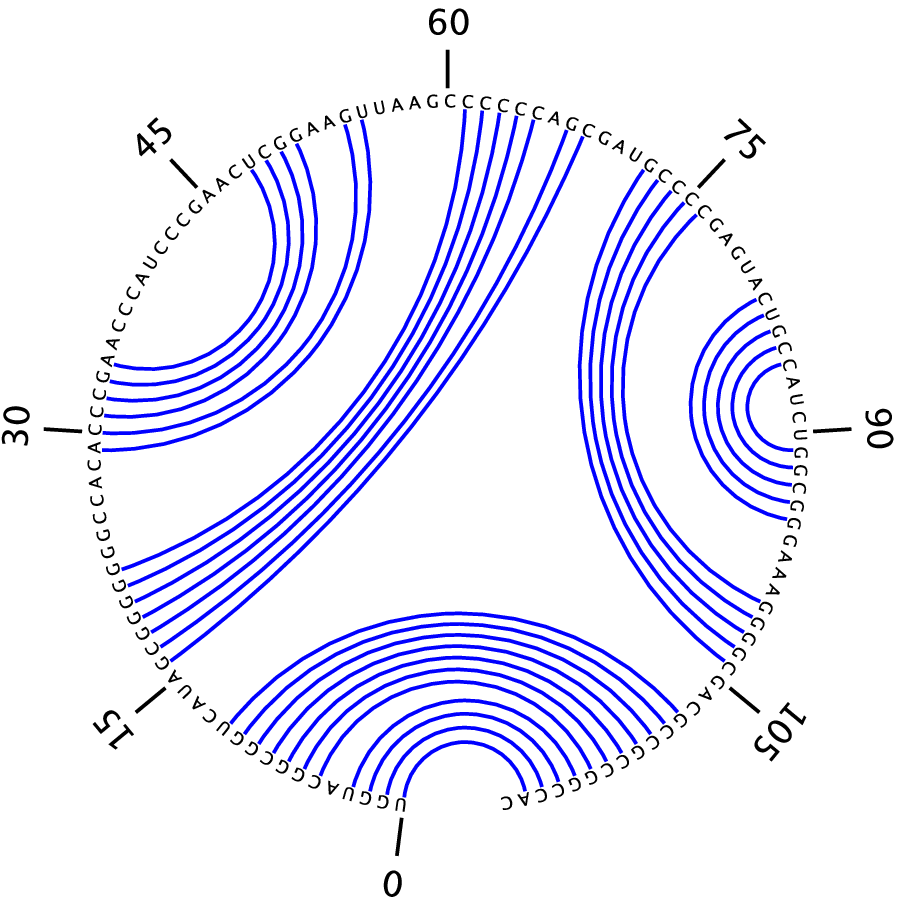}
\includegraphics[width=0.3\linewidth]{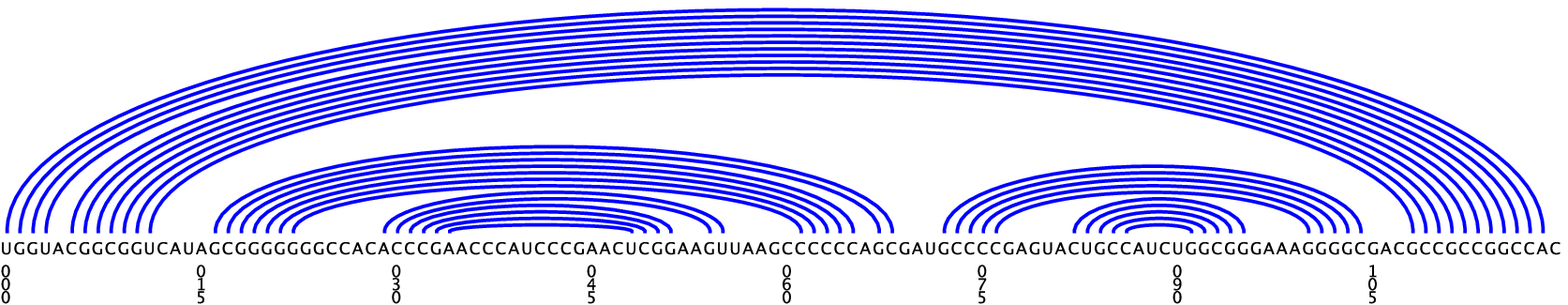}
\includegraphics[width=0.3\linewidth]{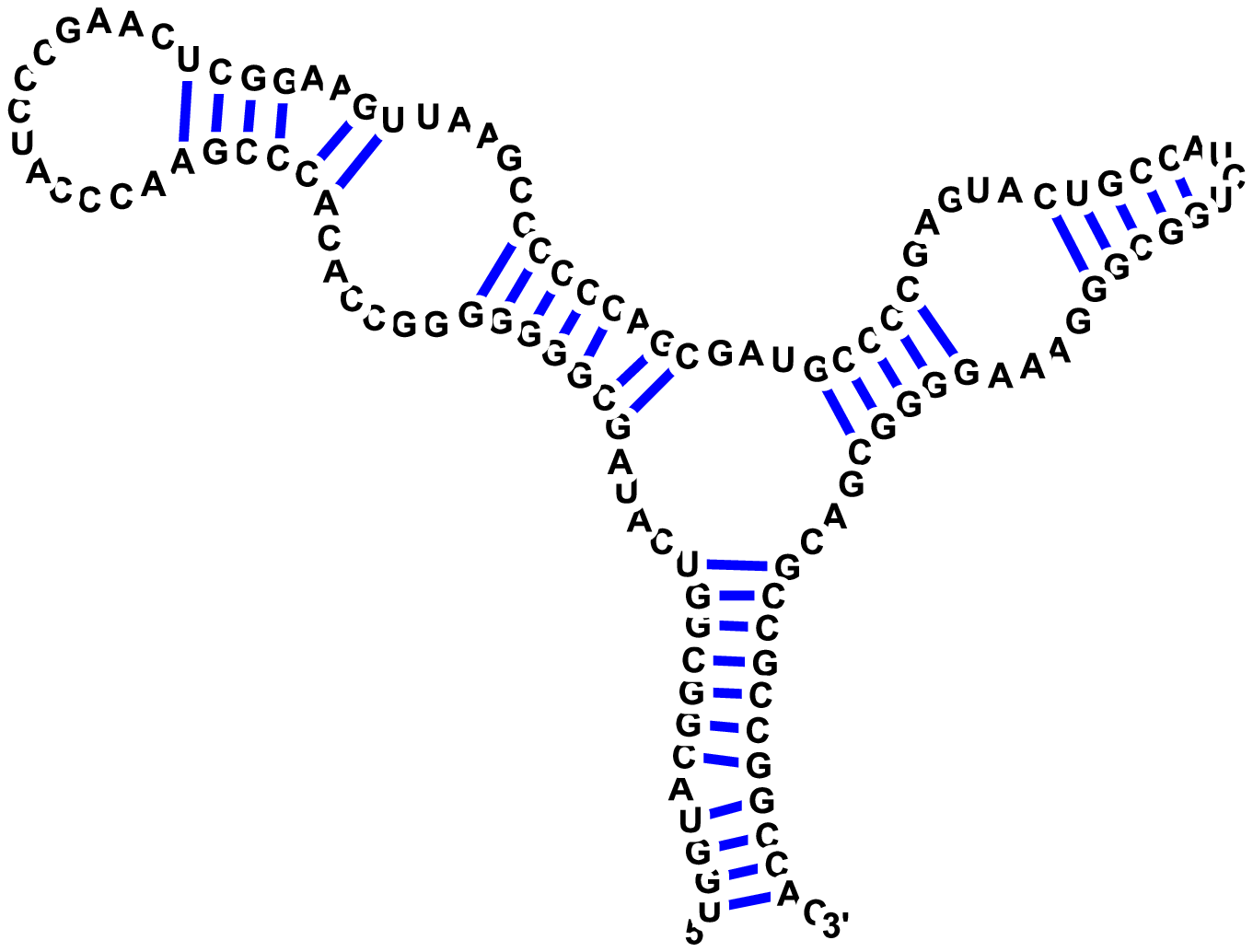}
     \caption{Depiction of 5S ribosomal RNA from {\em M. Jannaschii} with
     GenBank accession number NC\_000909. Equivalent representations as
    {\em (Left)} outerplanar graph (also called Feynman circular diagram),
    {\em (Middle)} Feynman linear diagram,
    {\em (Right)} classical diagram (most familiar to biologists).
     The sequence and
     secondary structure were taken from the 5S Ribosomal RNA Database
     \cite{Szymanski.nar00}, and the graph was created
     using {\tt jViz} \cite{Wiese.itn05}.}
\label{fig:rna1}
\end{figure}

\subsection{Outline and results of the paper}

In Section \ref{section:DSV}, 
we review a combinatorial method, known as the DSV methodology and
the important Flajolet-Odlyzko Theorem, which allows one to obtain
asymptotic values of Taylor coefficients of analytic generating functions 
$f(z) = \sum_{i=1}^{\infty} a_i z^i$ by
determining the dominant singularity of $f$.
The description of the DSV methodology and Flajolet-Odlyzko theorem
is not meant to be self-contained, although we very briefly describe the
broad outline. For a very clear review of
this method, with a number of example applications, please see
\cite{lorenzPontyClote:asymptotics} or the recent monograph of
Flajolet and Sedgewick \cite{flajoletBook}.

In Section \ref{section:numCanonStr}, we compute the asymptotic number 
$2.1614 \cdot n^{-3/2} \cdot 1.96798^n$ of canonical secondary structures,
obtaining the same value obtained by Hofacker, Schuster and Stadler
\cite{hofacker98} by a different method, known as the Bender-Meir-Moon
method.  In Section \ref{section:expBPcanonicalStr}
we compute the expected number 
$0.31724 \cdot n$ of base pairs in canonical secondary structures.
In Section \ref{section:numSatStr}, we apply the DSV methodology to
compute the asymptotic number $1.07427 \cdot n^{-3/2} \cdot 2.35467^n$
of saturated structures, while in Section
\ref{section:expNumBasePairsSatStr}, we compute the
expected number $0.337361  \cdot  n$ of base pairs of saturated structures.
In Section \ref{section:saturatedStemLoops}, we compute the 
asymptotic number $0.323954 \cdot 1.69562^n$ of saturated stem-loop structures,
which is substantially smaller than the number $2^{n-2}-1$ of (all) stem-loop
structures, as computed by Stein and Waterman \cite{steinWaterman}.

In Section \ref{qUniform distribution}, we consider a natural stochastic
process to generate random saturated structures, called in the sequel
{\em quasi-random saturated structures}.  The stochastic process 
adds base pairs, one at a time, according to the uniform distribution,
without violating any of the constraints of a structure. The main result
of this section is that asymptotically, the  expected number of
base pairs in quasi-random saturated structures
is $0.340633\cdot n$, rather close to the
expected number $0.337361  \cdot  n$ of base pairs of saturated structures.
The numerical proximity of these two values suggests that stochastic greedy
methods might find application in other areas of random graph theory.
In Section~\ref{conclusion} we provide some concluding remarks.

At the web site 
\url{http://bioinformatics.bc.edu/clotelab/SUPPLEMENTS/JBCBasymptotics/},
we have placed Python programs and Mathematica code used in computing
and checking the asymptotic number of canonical and saturated
secondary structures, as well as the Maple code for checking Drmota's
\cite{drmota} conditions to deduce the asymptotic normality of the 
density of states of RNA structures.

\section{DSV methodology}
\label{section:DSV}

In this section, we 
describe a combinatorial method sometimes called
\emph{DSV methodology}, after Delest, Sch\"utzenberger and
Viennot, which is a special case of what is called the \emph{symbolic method} in combinatorics, described at length in~\cite{flajoletBook}.
See also the Appendix of \cite{lorenzPontyClote:asymptotics} for a detailed
presentation of this method. This method enables one to obtain information on the number of combinatorial configurations defined by finite rules, for any size. This is done by translating those rules into equations satisfied by various \emph{generating functions}. A second step is to extract asymptotic expansions from these equations. This is done by studying the singularities of these generating functions viewed as analytic functions.

Since our goal is to derive asymptotic numbers of structures, following 
standard
convention we define an RNA secondary structure on a length $n$ sequence
to be a set of ordered pairs
$(i,j)$, such that $1 \leq i<j \leq n$ and the following are satisfied.
\begin{enumerate}
\item
{\em Nonexistence of pseudoknots:}
If $(i,j)$ and $(k,\ell)$ belong to $S$, then it is not the case that
$i<k<j<\ell$.
\item
{\em No base triples:}
If $(i,j)$ and $(i,k)$ belong to $S$, then $j=k$;
if $(i,j)$ and $(k,j)$ belong to $S$, then $i=k$.
\item
{\em Threshold requirement:}
If $(i,j)$ belongs to $S$, then $j-i > \theta$, where $\theta$, generally
taken to be equal to $3$, is the minimum number of unpaired bases in
a hairpin loop; i.e. there must be
at least $\theta$ unpaired bases in a hairpin loop.
\end{enumerate}
Note that the definition of secondary structure does not mention
nucleotide identity -- i.e.
we do {\em not} require base-paired positions $(i,j)$ to be occupied
by Watson-Crick or wobble pairs. For this reason, at times we may
say that $S$ is a secondary structure on $[1,n]$, rather than
saying that $S$ is a structure for RNA sequence of length $n$.
In particular, an expression such as ``the asymptotic number of structures
is $f(n)$'' means that the asymptotic number of structures on $[1,n]$ is
$f(n)$. 

\subsubsection*{Grammars}
We now proceed with basic definitions related to context-free grammars.
If $A$ is a finite
alphabet, then $A^*$ denotes the set of all finite sequences (called \emph{words}) of characters
drawn from $A$.
Let $\Sigma$ be the set consisting of the symbols for
left parenthesis $\op$, right parenthesis $\cp$, and dot $\bullet$, used
to represent a secondary structure in Vienna notation. 
A \emph{context-free} grammar (see, e.g., \cite{lewisPapadimitriou})
for RNA secondary structures is given by
$G = (V,\Sigma,\mathcal{R},S_0)$, where $V$ is a finite set of
nonterminal symbols 
(also called variables), $\Sigma = \{ \bullet, \op,\cp \}$,
$S_0 \in V$ is the {\em start} nonterminal, and
\[
\mathcal{R} \subseteq V \times (V \cup \Sigma)^*
\]
is a finite set of production rules. Elements of
$\mathcal{R}$ are usually denoted by $A \rightarrow w$, rather than
$(A,w)$. If rules
$A \rightarrow \alpha_1$,\ldots,
$A \rightarrow \alpha_m$ all have the same left-hand side, then
this is usually abbreviated by $A \rightarrow \alpha_1 \| \cdots \| \alpha_m$.

If $x,y \in (V \cup \Sigma)^*$ and $A \rightarrow w$ is
a rule, then by replacing the occurrence of $A$ in
$x A y$ we obtain $x w y$. Such a derivation in one
step is denoted by
$xAy \Rightarrow_G xwy$, while the
reflexive, transitive closure
of $\Rightarrow_G$ is denoted $\Rightarrow^*_G$.
The \emph{language} generated by context-free grammar $G$ is
denoted by $L(G)$, and defined by
\[
L(G) = \{ w \in \Sigma^* : S_0 \Rightarrow^*_G w \}.
\]
For any nonterminal $S \in V$, we also write $L(S)$ to denote
the language generated by rules from $G$ when using start symbol $S$.
A derivation of word $w$ from start symbol $S_0$ using grammar $G$
is a {\em leftmost} derivation, if each successive rule application is
applied to replace the leftmost nonterminal occurring in the intermediate
expression.
A context-free grammar $G$ is {\em non-ambiguous}, if there is
no word $w \in L(G)$ which admits two distinct leftmost derivations. 
This notion is important since it is only when applied to non-ambiguous grammars that the 
DSV methodology leads to exact counts.

For the sake of readers unfamiliar with context-free
grammars, we present some examples to illustrate the previous concepts.
Consider the following grammar $G$, which generates the collection of
well-balanced parenthesis strings, including the empty  
string.\footnote{A well-balanced parenthesis string is a word over~$\Sigma=\{(,)\}$ with as many closing parentheses as opening ones and such that when reading the word from left to right, the number of opening parentheses read is always at least as large as the number of closing parentheses.
RNA secondary structures can be considered to be well-balanced
parenthesis strings that also contain possible occurrences of  $\ub$,
and for which there exist at least $\theta$ occurrences of $\ub$ between
corresponding left and right parentheses $\op$ respectively $\cp$.}
Define $G = (V,\Sigma,R,S)$, where
the set $V$ of variables (also known as nonterminals) is
$\{ S\}$, the set $\Sigma$ of terminals is
$\{ \op,\cp \}$, where $S$ is the start symbol, and
where the set $R$ of rules is given by
\[
S \rightarrow \epsilon | \op S \cp | SS
\]
Here $\epsilon$ denotes the empty string.
We claim that $G$ is an ambiguous grammar. Indeed, consider the following
two leftmost derivations, where we denote the order of rule applications
$r1 := S \rightarrow \epsilon$, 
$r2 := S \rightarrow SS$, 
$r3 := S \rightarrow \op S \cp$, 
by placing the rule designator under the arrow. Clearly the leftmost derivation
\[
S \stackrel{\rightarrow}{\mbox{\tiny r2}}
SS \stackrel{\rightarrow}{\mbox{\tiny r2}}
SSS \stackrel{\rightarrow}{\mbox{\tiny r3,r1}}
\op \cp SS \stackrel{\rightarrow}{\mbox{\tiny r3,r1}}
\op \cp \op \cp S \stackrel{\rightarrow}{\mbox{\tiny r3,r1}}
\op \cp \op \cp \op \cp
\]
is distinct from the leftmost derivation
\[
S \stackrel{\rightarrow}{\mbox{\tiny r2}}
SS \stackrel{\rightarrow}{\mbox{\tiny r3,r1}}
\op  \cp S \stackrel{\rightarrow}{\mbox{\tiny r2}}
\op \cp \op S \cp S \stackrel{\rightarrow}{\mbox{\tiny r3,r1}}
\op \cp \op \cp S \stackrel{\rightarrow}{\mbox{\tiny r2}}
\op \cp \op \cp \op S \cp \stackrel{\rightarrow}{\mbox{\tiny r1}}
\op \cp \op \cp \op  \cp
\]
yet both generate the same well-balanced parenthesis string.
For the same reason, the grammar with rules
\[
S \rightarrow  \ub | \ub S | \op S \cp | SS
\]
generates precisely the collection of non-empty RNA secondary structures,
yet this grammar is ambiguous, and we would obtain an overcount by
applying the DSV methodology.  In contrast, the grammar whose rules are
\[
S \rightarrow  \ub |  \ub S | \op S \cp |  \op S \cp S
\]
is easily seen to be non-ambiguous and to generate all {\em non-empty}
RNA secondary structures.

\subsubsection*{Generating Functions}
Suppose that $G=(V,\Sigma,\mathcal{R},S)$ is a non-ambiguous
context-free grammar which generates a collection $L(S)$ of objects (e.g.
canonical secondary structures). To this grammar is associated a generating function
$S(z)= \sum_{n=0}^{\infty} s_n z^n$,
such that the $n$th Taylor coefficient $[z^n]S(z) = s_n$ represents the
number of objects we wish to count. In the sequel, $s_n$ will represent
the number of canonical secondary structures for RNA sequences of length $n$.
The DSV method uses Table~\ref{table:DSV} 
in order to translate
the grammar rules of $\mathcal{R}$ into a system of equations for the generating functions.

\begin{table}
\begin{center}
\begin{tabular}{|l|l|}
	\hline
	\mbox{Type of nonterminal} & \mbox{Equation for the g.f.} \\ \hline
	$S \to T \; | \; U$ & $S(z) = T(z) + U(z)$\\
		$S \to T\,U$        & $S(z) = T(z)U(z)$\\
		$S \to t$           & $S(z) = z$\\
		$S \to \varepsilon$ & $S(z) = 1$ \\ \hline
\end{tabular}
\end{center}
\caption{Translation between context-free grammars and generating functions.
Here, $G=(V,\Sigma,\mathcal{R},S_0)$ is a given context-free grammar,
$S$, $T$ and $U$ are any nonterminal symbols in $V$,  and
$t$ is a terminal symbol in $\Sigma$. The generating functions
for the languages $L(S)$, $L(T)$, $L(U)$ are respectively denoted by
$S(z)$, $T(z)$, $U(z)$.}
\label{table:DSV}
\end{table}

\subsubsection*{Asymptotics}
In the sequel, we often compute the asymptotic value of the Taylor coefficients
of generating functions by first applying the DSV methodology, then using
a simple corollary of a result of Flajolet  and Odlyzko~\cite{FlaOdl90}.
That corollary is restated here as the following theorem.
\begin{theorem}[Flajolet  and Odlyzko]
\label{thm:flajolet}
Assume that $S(z)$ has a singularity at $z=\rho>0$, is analytic in
the rest of the region $\triangle \backslash {1}$, depicted in
Figure \ref{fig:FlajoletTriangle},
and that as $z \rightarrow\rho$ in $\triangle$,
\begin{equation}\label{alg-sing}
S(z) \sim K(1-z/\rho)^{\alpha}.
\end{equation}
Then, as $n \rightarrow \infty$, if $\alpha \notin {0, 1, 2, ...}$,
\begin{eqnarray*}
 s_n \sim \frac{K}{\Gamma(-\alpha)} \cdot n^{-\alpha-1}\rho^{-n}.
\end{eqnarray*}
\end{theorem}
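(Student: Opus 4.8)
The statement to prove is the Flajolet–Odlyzko transfer theorem (the stated corollary): if $S(z)$ has a unique dominant singularity at $z=\rho>0$, is analytic in a Delta-domain $\triangle\setminus\{\rho\}$, and satisfies $S(z)\sim K(1-z/\rho)^{\alpha}$ as $z\to\rho$ in $\triangle$, then $s_n = [z^n]S(z) \sim \frac{K}{\Gamma(-\alpha)} n^{-\alpha-1}\rho^{-n}$ for $\alpha\notin\{0,1,2,\dots\}$. The plan is to proceed by the classical two-step argument: first establish the result for the exact "standard function" $(1-z)^{\alpha}$ on the unit disc, then transfer the estimate to $S(z)$ via the error term $S(z)-K(1-z/\rho)^{\alpha}$ using a Cauchy-integral bound on a contour adapted to the Delta-domain. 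A harmless rescaling $z\mapsto z/\rho$ reduces everything to $\rho=1$, so that $\rho^{-n}$ appears precisely as the scaling factor.

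\textbf{Step 1: the base case $(1-z)^{\alpha}$.} I would first compute $[z^n](1-z)^{\alpha}$ exactly: by the binomial series it equals $\binom{\alpha}{n}(-1)^n = \frac{(-\alpha)(1-\alpha)\cdots(n-1-\alpha)}{n!}$, which equals $\frac{\Gamma(n-\alpha)}{\Gamma(-\alpha)\,\Gamma(n+1)}$. Then Stirling's formula (in the form $\Gamma(n+a)/\Gamma(n+b)\sim n^{a-b}$) gives $[z^n](1-z)^{\alpha}\sim \frac{1}{\Gamma(-\alpha)} n^{-\alpha-1}$. This is exactly the claimed shape with $K=1$, $\rho=1$. (The hypothesis $\alpha\notin\{0,1,2,\dots\}$ is what keeps $\Gamma(-\alpha)$ finite and nonzero and the coefficients from vanishing eventually.)

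\textbf{Step 2: the transfer (error) estimate.} Write $S(z)=K(1-z/\rho)^{\alpha}+R(z)$ where $R(z)=o\bigl((1-z/\rho)^{\alpha}\bigr)$ as $z\to\rho$ in $\triangle$, and $R$ is analytic on $\triangle\setminus\{\rho\}$. By Step 1 and linearity it suffices to show $[z^n]R(z) = o\bigl(n^{-\alpha-1}\rho^{-n}\bigr)$. For this I use Cauchy's formula $[z^n]R(z) = \frac{1}{2\pi i}\oint R(z) z^{-n-1}\,dz$ over a contour that hugs the boundary of the Delta-domain: a large arc of radius slightly bigger than $\rho$, two line segments approaching $\rho$, and a small circular arc of radius $1/n$ around $\rho$. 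On the small arc one uses $|R(z)| = o(|1-z/\rho|^{\alpha}) = o(n^{-\alpha})$ and arc length $O(1/n)$ to get a contribution $o(n^{-\alpha-1}\rho^{-n})$; on the rectilinear parts one parametrizes $z=\rho(1+t/n)e^{\pm i\phi}$ and bounds $|z^{-n-1}|$ by an exponentially decaying-in-$t$ factor times $\rho^{-n}$, so the integral over $t\in[1,\infty)$ contributes $o(n^{-\alpha-1}\rho^{-n})$ as well; on the outer arc $|z|\ge\rho(1+\eta)$ forces $|z^{-n}|\le\rho^{-n}(1+\eta)^{-n}$, exponentially smaller than the main term. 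Summing these, $[z^n]R(z)=o(n^{-\alpha-1}\rho^{-n})$.

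\textbf{Combining and the main obstacle.} Adding the two pieces yields $s_n = K\cdot[z^n](1-z/\rho)^{\alpha} + [z^n]R(z) \sim \frac{K}{\Gamma(-\alpha)} n^{-\alpha-1}\rho^{-n}$, which is the claim. The routine part is Step 1 (binomial coefficients and Stirling). The genuinely delicate part is Step 2, specifically the bookkeeping on the rectilinear segments of the Hankel-type contour: one must choose the opening angle of the Delta-domain and the "width" parameter so that the substitution $z=\rho(1+t/n)e^{\pm i\phi}$ stays inside the region of analyticity while simultaneously $|z^{-n}|$ decays like $e^{-ct}\rho^{-n}$ uniformly in $t$, and one must control the transition where the hypothesis $R(z)=o((1-z/\rho)^{\alpha})$ is only asymptotic (valid near $\rho$) versus merely $R$ being bounded on the far parts of the contour. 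This uniformity — patching a local asymptotic hypothesis into a global contour bound — is where the argument requires care, and it is exactly the content of the Flajolet–Odlyzko singularity-analysis machinery, so in the paper I would simply cite \cite{FlaOdl90} (or \cite{flajoletBook}) for the full details rather than reproduce them.
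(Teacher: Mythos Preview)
The paper does not prove this theorem at all; it is stated as a cited result from \cite{FlaOdl90} (and \cite{flajoletBook}), introduced as ``a simple corollary of a result of Flajolet and Odlyzko.'' Your sketch is correct and is precisely the standard two-step argument from those references: exact asymptotics of $[z^n](1-z)^\alpha$ via the Gamma-function identity and Stirling, followed by the Hankel-type contour transfer for the $o$-error term over a Delta-domain. So you have actually done more than the paper, which simply quotes the result---and your own final remark, that you would ``simply cite \cite{FlaOdl90} (or \cite{flajoletBook}) for the full details rather than reproduce them,'' is exactly what the paper does.
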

It is a consequence of Table~\ref{table:DSV} that the generating series of context-free grammars are algebraic (this is the celebrated theorem of Chomsky and Sch\"utzenberger~\cite{ChomskySchutzenberger1963}). In particular this implies that they have positive radius of convergence, a finite number of singularities, and their behaviour in the neighborhood of their singularities is of the type~\eqref{alg-sing}. (See~\cite[\S VII.6--9]{flajoletBook} for an extensive treatment.)

A singularity of minimal modulus as in Theorem~\ref{thm:flajolet} is called a \emph{dominant singularity}. The location of the dominant singularity may be a source of difficulty. The simple case is when an explicit expression is obtained for the generating functions; this happens for canonical secondary structures. The situation when only the system of polynomial equations is available is more involved; we show how to deal with it in the case of saturated structures. 

\begin{figure}[htbp]
     \centering
\scalebox{0.55}{\input{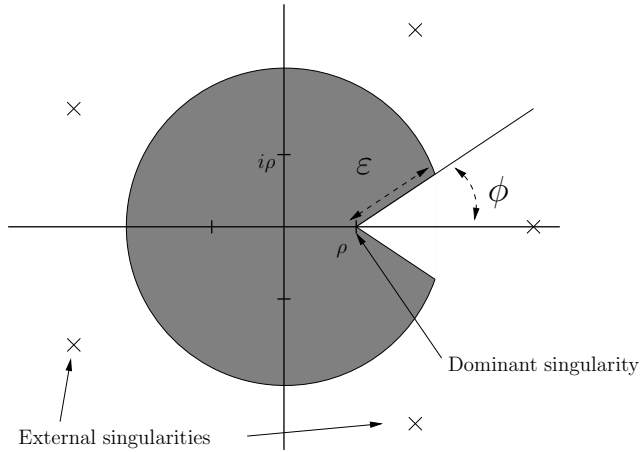}}
\caption{The shaded region $\triangle$ where, except at $z=\rho$, the generating function $S(z)$ must be analytic. 
} 
\label{fig:FlajoletTriangle}
\end{figure}

\subsection{Asymptotic number of canonical secondary structures}
\label{section:numCanonStr}

In  Bompf\"unewerer et al.  \cite{Bompfunewerer.jmb08},
the notion of {\em canonical secondary structure} $S$ is defined as a
secondary structure having no {\em lonely} (isolated) base pairs; i.e. formally,
there are no base pairs $(i,j) \in S$ for which both $(i-1,j+1)\not\in S$ 
and $(i+1,j-1)\not\in S$. In this section, we compute the asymptotic
number of canonical secondary structures.
Throughout this section, secondary structure is interpreted to mean a
secondary structure on an RNA sequence of length $n$, for which each
base can pair with any other base (not simply Watson-Crick and wobble pairs),
and  with minimum number $\theta$ of
unpaired bases in every hairpin loop set to be $1$. At the cost of
working with more complex expressions, by the same method, one could analyze
the case when $\theta=3$, which is assumed for the software {\tt mfold}
\cite{zuker:mfoldWebserver} and {\tt RNAfold} \cite{hofacker:ViennaWebServer}.

\subsubsection*{Grammar}
Consider the context-free grammar $G=(V,\Sigma,{\cal R},S)$,
where $V$ consists of nonterminals $S,R$, $\Sigma$ consists of
the terminals $\ub, \op, \cp$, $S$ is the
start symbol and $\cal R$ consists of the following rules:
\begin{eqnarray}
\label{grammar:canonicalStr}
S &\rightarrow &\bullet | S \bullet | \op R \cp | S \op R \cp\\
R &\rightarrow &\op \bullet \cp |  \op R \cp | \op S \op R \cp \cp |
\op S \bullet \cp \nonumber
\end{eqnarray}
The nonterminal $S$ is intended to generate all {\em nonempty canonical}
secondary structures. In contrast, the nonterminal $R$ is intended to
generate all secondary structures which become canonical when surrounded
by a closing set of parentheses.
We prove by induction on expression length that the grammar
$G$ is non-ambiguous and generates all nonempty canonical secondary structures.

Define context-free grammar $G_R$ to consist of the collection 
$\cal R$ of rules from $G$, defined above, 
with starting nonterminal $S$, respectively. Formally,
\begin{eqnarray*}
G_R &= & (V,\Sigma,{\cal R},R).
\end{eqnarray*}
Let $L(G)$, $L(G_R)$ denote the languages generated
respectively by grammars $G,G_R$.
Now define languages $L_1,L_2$ of {\em nonempty}
secondary structures with $\theta=1$  by
\begin{eqnarray*}
L_1 &=& \{ S : \mbox{$S$ is canonical}\}\\
L_2 &=& \{ S : \mbox{$\op S \cp$ is canonical}\}.
\end{eqnarray*}
Note that structures like $\ub \ub \op \ub \cp$ and
$\op \ub \cp \op \ub \cp$ belong to $L_1$, but not to $L_2$,
while structures like $\op \op \ub \cp \cp$ belong to both $L_1,L_2$.
Note that any structure $S$ belonging to $L_2$ must be of the form
$\op S_0 \cp$; indeed, if $S$ were not of this form, but rather of the
form either $\ub S_0$ or $\op S_0 \cp S_1$, then by $\op S \cp$ would
have an outermost lonely pair of parentheses.
\medskip

\noindent
{\sc Claim.} $L_1=L(G)$, $L_2 = L(G_R)$. \hfill\break
\medskip

\noindent
{\sc Proof of Claim.} 
Clearly $L_1 \supseteq L(G)$, $L_2  \supseteq  L(G_R)$, so we show the
reverse inclusions by induction; i.e.
by induction on $n$, we prove that
$L_1 \cap \Sigma^n \subseteq L(G) \cap \Sigma^n$, 
$L_2 \cap \Sigma^n \subseteq L(G_R) \cap \Sigma^n$.
\medskip

\noindent
{\sc Base case:} $n=1$. 
Clearly
$L(G) \cap \Sigma = \{ \ub \} = L_1 \cap \Sigma$,
$L(G_R) \cap \Sigma = \emptyset  = L_2 \cap \Sigma$.
\medskip

\noindent
{\sc Induction case:} Assume that the claim holds for all $n<k$.
\medskip

\noindent
{\em Subcase 1.} Let
$\strS$ be a canonical secondary structure with length $|\strS|=k>1$.
Then either 
{\em (1)} $\strS = \ub \strS_0$, where $\strS_0 \in L_1$, or
{\em (2)} $\strS = \op \strS_0 \cp$, where $\strS_0 \in L_2$, or
{\em (3)} $\strS = \op \strS_0 \cp \strS_1$, where $\strS_0 \in L_2$ and
$\strS_1 \in L_1$.  Each of these cases corresponds to a different rule
having left side $S$, hence by the induction hypothesis, it follows that
$\strS \in L(G)$.

\noindent
{\em Subcase 2.} Let
$\strS \in L_2$ be a secondary structure with length $|S|=k>1$,
for which $\op \strS \cp$ is canonical.  If $\strS$ were of the form
$\ub \strS_0$ or $\op \strS_0 \cp \strS_1$, then $\op \strS \cp$ 
would not be canonical,
since its  outermost parenthesis pair would be a lonely pair. 
Thus $\strS$ is of the form $\op \strS_0 \cp$, where either
{\em (1)} $\strS_0$ begins with  $\ub$,  or
{\em (2)} $\strS_0$ is of the form $\op \strS_1 \cp$, where $\strS_1$ is
not canonical, but $\op \strS_1 \cp$ becomes canonical, or
{\em (3)} $\strS_0$ is of the form $\op \strS_1 \cp$, where $\strS_1$ is
canonical and $\op \strS_1 \cp$ is canonical as well.

In case {\em (1)}, $\strS_0$ is either $\ub$ or $\ub \strS_1$,
where $\strS_1$ is canonical. In case {\em (2)},
$\strS_0$ is of the form $\op \strS_1 \cp$, where $\strS_1$ must have
the property that $\op \strS_1 \cp$ is canonical. In case {\em (3)},
$\strS_0$ is of the form $\op \strS_1 \cp \strS_2$, where it must be that
$\op \strS_1 \cp$ is canonical and $\strS_2$ is canonical. By applying
corresponding rules and the induction hypothesis, it follows that 
$S \in L(G_R)$.

It now follows by induction that $L_1=L(G)$, $L_2 = L(G_R)$. A similar
proof by induction shows that the grammar $G$ is non-ambiguous.

\subsubsection*{Generating Functions}

Now, let $s_n$ denote the number of canonical secondary structures
on a length $n$ RNA sequence.  Then $s_n$ is the
$n$th Taylor coefficient of the generating function
$S(z) = \sum_{n \geq 0} s_n z^n$, denoted by
$s_n = [z^n]S(z)$.  Similarly, let
$R(z) = \sum_{n \geq 0} R_n z^n$ be the generating function
for the number of secondary structures on $[1,n]$
with $\theta=1$, which become canonical when surrounded by a closing
set of parentheses.

By Table~\ref{table:DSV}, the non-ambiguous grammar
(\ref{grammar:canonicalStr}) gives the following equations
\begin{eqnarray}
\label{eqn:canonicalStr1}
S(z) &=& z+S(z)z + R(z) z^2 + S(z)R(z) z^2 \\
\label{eqn:canonicalStr2}
R(z) &=& z^3 + R(z)z^2 + S(z)R(z)z^4 + S(z)z^3 
\end{eqnarray}
which can be solved explicitly (solve the
second equation for~$R$ and inject this in the first equation):
\begin{eqnarray}
\label{eqn:1}
S(z)=\frac{1-z-z^2+z^3-z^5-\sqrt{F(z)}}{2z^4}
\end{eqnarray}
and 
\begin{eqnarray}
\label{eqn:2}
S(z)=\frac{1-z-z^2+z^3-z^5+\sqrt{F(z)}}{2z^4}
\end{eqnarray}
where 
\begin{equation}\label{eqn:Fofz}
F(z) =4 z^5 \left(-1+z^2-z^4\right)+\left(-1+z+z^2-z^3+z^5\right)^2.
\end{equation}
When evaluated at $z=0$, Equation (\ref{eqn:2}) gives~$\lim_{r\rightarrow 0}S(z)=\infty$.
Since~$S(z)$ is known to be analytic at~0,
we conclude that $S(z)$ is given by~\eqref{eqn:1}.

\subsubsection*{Location of the dominant singularity}
The square root function $\sqrt{z}$ has a singularity at $z=0$,
so we are led to investigate the roots of $F(z)$.
A numerical computation with Mathematica\texttrademark ~gives the 10 roots
$0.508136$,
$4.11674$,
$-0.868214-0.619448 i$,
$-0.868214+0.619448 i$,
$-0.799805-0.367046 i$,
$-0.799805+0.367046 i$,
$0.410134-0.564104 i$,
$0.410134+0.564104 i$,
$0.945448-0.470929 i$,
$0.945448+0.470929 i$.
It follows that 
$\rho=0.508136$ is the root of $F(z)$ having smallest (complex) modulus.

\subsubsection*{Asymptotics}
Let $T(z)=\frac{ 1-z-z^2+z^3-z^5}{2z^4}$ and
factor $1-z/\rho$ out of $F(z)$ to obtain $Q(z)(1-z/\rho)  = F(z)$.
It follows that
\[
S(z)-T(\rho) = \frac{\sqrt{Q(\rho)}}{2\rho^4} \cdot (1-z/\rho)^{\alpha}+O(1-z/\rho),\qquad z\rightarrow\rho,
\]
where $\alpha = 1/2$.  This shows that $\rho$ is indeed a dominant singularity for~$S$. Note that for each~$n\ge1$, $S(z)$ and $S(z)-T(\rho)$ have the same Taylor coefficient of index~$n$, namely $s_n$.
Now, it is a direct consequence of Theorem~\ref{thm:flajolet} that
\begin{eqnarray}
\label{eqn:derivationNumCanonStr}
s_n \sim  \frac{K(\rho)}{\Gamma(-\alpha)} \cdot n^{-\alpha-1} \cdot
(1/\rho)^n,\qquad n\rightarrow\infty
\end{eqnarray}
where $\alpha=1/2$ and $K(z) = \frac{\sqrt{Q(z)}}{2z^4}$.
Plugging $\rho= 0.508136$ into equation (\ref{eqn:derivationNumCanonStr}),
we derive the following theorem, first obtained by Hofacker, Schuster and
Stadler \cite{hofacker98} by a different method.
\begin{theorem}
The asymptotic number of canonical secondary structures on $[1,n]$ is 
\begin{equation}
\label{eqn:asymNumCanonStr}
2.1614 \cdot n^{-3/2} \cdot 1.96798^n.
\end{equation}
\end{theorem}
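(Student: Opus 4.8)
The plan is to carry out the asymptotic analysis exactly along the lines already laid out in the ``Asymptotics'' paragraph preceding the statement, making the three numerical constants precise. First I would recall what has already been established: the generating function $S(z)=\sum_{n\ge 0}s_n z^n$ for nonempty canonical secondary structures is given explicitly by~\eqref{eqn:1}, namely $S(z)=\bigl(1-z-z^2+z^3-z^5-\sqrt{F(z)}\bigr)/(2z^4)$ with $F(z)$ as in~\eqref{eqn:Fofz}, and that $\rho=0.508136$ is the real root of $F(z)$ of smallest modulus among the ten roots listed. The first step is thus to confirm that $\rho$ is a \emph{dominant} singularity of $S(z)$: the only candidate singularities of $S$ are the roots of $F$ (the $z^4$ in the denominator is cancelled since $S$ is analytic at $0$), and $\rho$ is strictly smaller in modulus than every other root, so $S$ is analytic in a domain of the form $\triangle$ of Figure~\ref{fig:FlajoletTriangle} with its single boundary singularity at $\rho$.

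Next I would extract the local behaviour of $S$ at $\rho$. Writing $T(z)=(1-z-z^2+z^3-z^5)/(2z^4)$, we have $S(z)=T(z)-\sqrt{F(z)}/(2z^4)$. Since $\rho$ is a simple root of $F$ (one checks $F'(\rho)\ne 0$ from the fact that it does not appear as a repeated entry in the root list, or directly), we may factor $F(z)=Q(z)\,(1-z/\rho)$ with $Q$ analytic and $Q(\rho)\ne 0$; then near $z=\rho$,
\[
S(z) = T(\rho) - \frac{\sqrt{Q(\rho)}}{2\rho^4}\,(1-z/\rho)^{1/2} + O\!\bigl(1-z/\rho\bigr),
\]
which is the singular expansion~\eqref{alg-sing} with $\alpha=1/2$ and $K=K(\rho)=-\sqrt{Q(\rho)}/(2\rho^4)$ (the analytic part $T(\rho)$ and the $O(1-z/\rho)$ term contribute nothing to the coefficient asymptotics, and subtracting the constant $T(\rho)$ changes no Taylor coefficient of positive index). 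Applying Theorem~\ref{thm:flajolet} with $\alpha=1/2$, and using $\Gamma(-1/2)=-2\sqrt{\pi}$, gives
\[
s_n \sim \frac{K(\rho)}{\Gamma(-1/2)}\,n^{-3/2}\,\rho^{-n}
 = \frac{\sqrt{Q(\rho)}}{4\sqrt{\pi}\,\rho^4}\; n^{-3/2}\,\rho^{-n}.
\]

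Finally I would evaluate the two constants numerically. The growth rate is $1/\rho = 1/0.508136 = 1.96798\ldots$, which supplies the exponential factor $1.96798^n$. For the leading constant, one computes $Q(\rho)=\lim_{z\to\rho}F(z)/(1-z/\rho) = -\rho F'(\rho)$ (so that $Q(\rho)>0$ since $F$ changes sign at the simple real root $\rho$), plugs the numerical value of $\rho$ into $-\rho F'(\rho)$, takes the square root, and divides by $4\sqrt{\pi}\,\rho^4$; this yields the stated coefficient $2.1614$. The only real subtlety — the ``main obstacle,'' such as it is — is bookkeeping the sign and the choice of branch: one must make sure the $-\sqrt{F(z)}$ branch (forced by analyticity at $0$, as argued after~\eqref{eqn:2}) together with $F<0$ just to the left of $\rho$ produces a \emph{positive} constant $K(\rho)/\Gamma(-1/2)$, as it must since $s_n>0$; everything else is routine polynomial evaluation, which can be (and, per the paper's supplementary material, has been) checked by computer algebra.
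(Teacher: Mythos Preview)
Your proposal is correct and follows essentially the same route as the paper: identify the explicit expression~\eqref{eqn:1}, factor $F(z)=Q(z)(1-z/\rho)$ about the smallest-modulus root $\rho$, read off the square-root singular term, and apply Theorem~\ref{thm:flajolet}. You are in fact a bit more careful than the paper about the sign bookkeeping (one small slip: $F>0$ just to the \emph{left} of $\rho$, not to the right, since $F(0)=1$; your conclusion $Q(\rho)=-\rho F'(\rho)>0$ is nonetheless correct).
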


\subsection{Asymptotic expected number of base pairs in canonical structures}
\label{section:expBPcanonicalStr}

In this section, we derive the expected number of base pairs in 
canonical secondary structures on $[1,n]$.

\subsubsection*{Generating Functions}
The DSV methodology is actually able to produce \emph{multivariate} generating series.
Modifying the equations (\ref{eqn:canonicalStr1},\ref{eqn:canonicalStr2}) 
by adding a new
variable $u$, intended to count the number of base pairs, we get
\begin{eqnarray}
\label{eqn:expBPcanonicalStr1}
S(z,u) &=& z+S(z,u)z + R(z,u)u z^2 + S(z,u)R(z,u)u z^2 \\
\label{eqn:expBPcanonicalStr2}
R(z,u) &=& u z^3 + R(z,u)uz^2 + S(z,u)R(z,u)u^2z^4 + S(z,u)uz^3 .
\end{eqnarray}
This can be solved as before 
to yield
the solution\footnote{Since $S(z,u)$ is known to be
analytic at $0$, 
we have discarded one of the two solutions as before.}
\begin{eqnarray*}
S(z,u) &=& \sum_{n\ge 0} \sum_{k\ge 0} s_{n,k} z^n u^k\\
 & = & 2 u^2 z^4 \left(1-z-u z^2+u z^3-u^2 z^5 -  \right. \\
 & & \left.
\sqrt{4 u^2 z^5 \left(-1+u z^2-u^2 z^4\right)+
\left(-1+z+u z^2-u z^3+u^2 z^5\right)^2}\right)
\end{eqnarray*}
Here, the coefficient~$s_{n,k}$ is the number of canonical secondary structures of size~$n$ with~$k$ base pairs. 
Using a classical observation on multivariate generating functions, we recover
the expected
number of base pairs in a canonical secondary structure 
on $[1,n]$ using the partial derivative
of $S(z,u)$; indeed,
\begin{eqnarray*}
\frac{[z^n]\frac{\partial S(z,u)}{\partial u}(z,1)}{[z^n]S(z,1)} & = &
\frac{[z^n]\left(\sum_{i\ge 0} \sum_{k\ge 0} s_{i,k} z^i k  u^{k-1}\right)(z,1)}{s_{n}}\\
& = & \frac{\sum_{k\ge 0} s_{n,k} k }{s_{n}}
 =  \sum_{k\ge 0}  k \frac{s_{n,k}}{s_n},
\end{eqnarray*}
and ${s_{n,k}}/{s_{n}}$ is the (uniform) probability
that a canonical secondary structure on $[1,n]$ has
exactly $k$ base pairs.

We compute that
$G(z) = \frac{\partial S(z,u)}{\partial u}(z,1)$ satisfies
\[
G(z) = \frac{-(z^2 - 2) (T(z) - \sqrt{F(z)} + z \sqrt{F(z)})}
{2z^4 \sqrt{F(z)}}
\]
where $T(z) = (1 - 2z + 2z^3 - z^4 - 3z^5 + z^6)$ and $F(z)$ is as in~\eqref{eqn:Fofz}.
Simplification yields
\begin{eqnarray*}
G(z) &= &\frac{-(z^2 - 2)(z-1)}{2z^4} - \frac{T(z)(z^2-2)}{2z^4}
\cdot \left( \frac{1}{\sqrt{F(z)}} \right). \\
\end{eqnarray*}
\subsubsection*{Asymptotics}
From this expression, it is clear that the dominant singularity is again located at the same~$\rho = 0.508136$.
A local expansion there gives
\[G(z)\sim K(\rho)(1-z/\rho)^{-1/2},\qquad z\rightarrow\rho\]
with $K(z) = - \frac{Q(z)^{-1/2}T(z)(z^2-2)}{2z^4}$.
By Theorem~\ref{thm:flajolet},
we obtain the asymptotic value
\begin{eqnarray}
\label{eqn:flajoletExpBPcanon}
\frac{K(\rho)}{\Gamma(-\alpha)} \cdot n^{-3/2} \cdot
(1/\rho)^n.
\end{eqnarray}
Plugging $\rho= 0.508136$ into equation 
(\ref{eqn:flajoletExpBPcanon}), we find the asymptotic value of
$[z^n]\frac{\partial S(z,u)}{\partial u}(z,1)$ is
\begin{eqnarray}
\label{eqn:asympPartial}
0.68568 \cdot n^{-1/2} \cdot 1.96798^n.
\end{eqnarray}
Dividing (\ref{eqn:asympPartial})
by the asymptotic number $[z^n]S(z)$ of canonical secondary structures,
given in (\ref{eqn:asymNumCanonStr}), we have the following theorem.
\begin{theorem}
\label{thm:expNumBasePairsCanonStr}
The asymptotic expected 
number of base pairs in canonical secondary structures is $0.31724 \cdot n$.
\end{theorem}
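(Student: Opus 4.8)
The plan is to mirror, step for step, the argument already carried out for the asymptotic count \eqref{eqn:asymNumCanonStr}, now with a catalytic variable $u$ marking base pairs. Concretely: solve the bivariate system \eqref{eqn:expBPcanonicalStr1}--\eqref{eqn:expBPcanonicalStr2} for $S(z,u)$, form the cumulative generating function $G(z)=\frac{\partial S}{\partial u}(z,1)$, perform a singularity analysis of $G$, and divide the resulting coefficient asymptotics by those of $s_n=[z^n]S(z,1)$.

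First I would pin down the meaning of $G$. Since $S(z,u)=\sum_{n,k}s_{n,k}z^nu^k$ with $s_{n,k}$ the number of canonical structures on $[1,n]$ having exactly $k$ base pairs, the coefficient $[z^n]G(z)=\sum_k k\,s_{n,k}$ is the total number of base pairs over all canonical structures on $[1,n]$, so the quantity to estimate is precisely $[z^n]G(z)/s_n$. The system \eqref{eqn:expBPcanonicalStr1}--\eqref{eqn:expBPcanonicalStr2} is affine in $R(z,u)$; eliminating $R$ gives a quadratic in $S(z,u)$ whose root analytic at the origin is the displayed closed form, and differentiating that form in $u$, setting $u=1$, and simplifying produces the stated expression for $G(z)$, in which the only non-polynomial ingredient is $F(z)^{-1/2}$ with $F$ as in \eqref{eqn:Fofz}.

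Next comes the singularity analysis. The purely rational summand $-(z^2-2)(z-1)/(2z^4)$ is a Laurent polynomial in $z$ containing only negative powers, hence contributes nothing to $[z^n]$ for $n\ge 0$; and since $F(0)=1\neq 0$, the origin is not a singularity of the remaining term either. Thus the singularities of $G$ are exactly the branch points arising from the roots of $F$, and by the computation in Section~\ref{section:numCanonStr} the one of smallest modulus is $\rho=0.508136$. Factoring $F(z)=Q(z)(1-z/\rho)$ with $Q(\rho)\neq0$ gives $F(z)^{-1/2}\sim Q(\rho)^{-1/2}(1-z/\rho)^{-1/2}$, whence
\[
G(z)\sim K(\rho)\,(1-z/\rho)^{-1/2},\qquad z\to\rho,\qquad K(z)=-\frac{Q(z)^{-1/2}\,T(z)\,(z^2-2)}{2z^4},
\]
with $T(z)=1-2z+2z^3-z^4-3z^5+z^6$. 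Checking $K(\rho)\neq 0$ shows the exponent is exactly $\alpha=-1/2$. Applying Theorem~\ref{thm:flajolet} with $\alpha=-1/2$ (so $\Gamma(-\alpha)=\Gamma(1/2)=\sqrt\pi$) yields $[z^n]G(z)\sim\bigl(K(\rho)/\sqrt\pi\bigr)n^{-1/2}\rho^{-n}$, numerically $0.68568\cdot n^{-1/2}\cdot1.96798^n$; dividing by $s_n\sim 2.1614\cdot n^{-3/2}\cdot1.96798^n$ from \eqref{eqn:asymNumCanonStr} cancels the exponential, combines $n^{-1/2}/n^{-3/2}=n$, and leaves the constant $0.68568/2.1614=0.31724$, giving $0.31724\cdot n$.

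The argument is essentially routine; the one point that must not be glossed over is the verification, in the third step, that after differentiating and simplifying the radical the coefficient $K(\rho)$ of $(1-z/\rho)^{-1/2}$ is genuinely nonzero (otherwise one would have to expand $G$ to higher order and the order of growth would change) and that no other root of $F$ lies on the circle $|z|=\rho$, so that $\rho$ is a simple dominant singularity of the type required by Theorem~\ref{thm:flajolet}. Both are settled at once by the explicit list of roots of $F$ given in Section~\ref{section:numCanonStr} together with a direct evaluation of $K$ at $\rho$.
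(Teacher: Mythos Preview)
Your proposal is correct and follows essentially the same route as the paper: solve the bivariate system for $S(z,u)$, differentiate in $u$ and set $u=1$ to obtain $G(z)$, identify the $F(z)^{-1/2}$ term as the source of the dominant singularity at $\rho$, apply Theorem~\ref{thm:flajolet} with $\alpha=-1/2$, and divide by the asymptotics of $s_n$. If anything, you are slightly more careful than the paper in explicitly noting that the Laurent-polynomial summand contributes nothing for $n\ge0$, that $F(0)\neq0$, and that $K(\rho)\neq0$ with no competing singularity on $|z|=\rho$.
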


\subsection{Asymptotic number of saturated structures}
\label{section:numSatStr}

An RNA secondary structure is {\em saturated} if 
it is not possible to add any base pair without
violating the definition of secondary structures. If one models
the folding of an RNA secondary structure as a random walk on a Markov
chain (i.e. by the Metropolis-Hastings algorithm), then saturated 
structures correspond to {\em kinetic traps} with respect to the
Nussinov energy model \cite{nussinovJacobson}.
The asymptotic number of saturated structures 
was determined in \cite{Clote.jcb06} by using a method known
as Bender's Theorem, as rectified by Meir and Moon \cite{meirMoon}.
In this section, we apply the DSV methodology to obtain the same
asymptotic limit, and in the next section we obtain the expected
number of base pairs of saturated structures.

\subsubsection*{Grammar}

Consider the context-free grammar with nonterminal symbols $S,R$,
terminal symbols $\bullet, \op,\cp$, start symbol $S$ and production rules
\begin{eqnarray}
\label{eqn:CFGforSatStrS}
S & \rightarrow & \bullet | \bullet \bullet | R \bullet | R \bullet \bullet | \op S \cp |  S \op S \cp \\
\label{eqn:CFGforSatStrR}
R & \rightarrow & \op S \cp |  R \op S \cp 
\end{eqnarray}
It can be shown by induction on expression length that
$L(S)$ is the set of saturated structures, and $L(R)$ is the
set of saturated structures with no {\em visible} position; i.e.
external to every base pair \cite{Clote.jcb06}.
Here, position $i$ is visible in a secondary structure $T$
if it is external to every base pair of $T$; i.e. for all $(x,y) \in T$,
$i<x$ or $i>y$. 

\subsubsection*{Generating Functions}
Let
\begin{eqnarray}
\label{eqn:defGenFunS}
S(z) = \sum_{i=0}^{\infty} s_i \cdot z^i, \qquad
R(z) = \sum_{i=0}^{\infty} r_i \cdot z^i
\end{eqnarray}
denote the generating functions $S$ resp. $R$,
corresponding to the problems of counting
number of saturated secondary structures resp. number of saturated structures
having no visible positions.  Applying Table~\ref{table:DSV},
we are led to the equations
\begin{eqnarray}
\label{eqn:grammarRulesS}
S &=& z+ z^2 + z R + z^2 R + z^2 S + z^2 S^2\\
\label{eqn:grammarRulesR}
R &=& z^2 S + z^2 R S.
\end{eqnarray}

\subsubsection*{Location of the dominant singularity}
By first solving~\eqref{eqn:grammarRulesR} for~$R$ and injecting in~\eqref{eqn:grammarRulesS}, we get
\begin{equation}\label{eqn:S}
	S=z+z^2+z^2S+z^2S^2+(z+z^2)\frac{z^2S}{1-z^2S},
\end{equation}
which upon normalizing gives a polynomial equation of the third degree
\begin{equation}
\label{eqn:resultantP}
P(z,S) = -S^3 z^4+z (1+z)-S^2 z^2 \left(-2+z^2\right)+S \left(-1+z^2\right) 
= 0.
\end{equation}
Unlike earlier work in this paper, direct solution of this
equation by Cardano's formulas gives expressions that are difficult to handle. 
Instead, we locate the singularity by appealing to general techniques for implicit generating functions~\cite[\S VII.4]{flajoletBook}.

%
%
%

By the implicit function theorem, singularities of $P(z,S)$ 
only occur when both $P$ and its partial derivative 
\begin{eqnarray}
\label{eqn:P_S}
\frac{\partial P}{\partial S}(z,S)
    = -1 + (1 + 4 S) z^2 - S (2 + 3 S) z^4
\end{eqnarray}
vanish simultaneously. 

The common roots of $P$ and $\partial P/\partial S$ can be located by eliminating~$S$ between those two equations, for instance using the classical theory of \emph{resultants} (see, e.g., \cite{lang:Algebra}). This gives a polynomial
%
\begin{equation}
Q(z) = 
z^{11} (1 + z) (4 + z - 7 z^2 - 28 z^3 - 32 z^4 + 4 z^6),
\end{equation}
that vanishes at all~$z$ such that~$(z,S)$ is a common root of~$P$ and~$\partial P/\partial S$.

Numerical computation of the roots of $Q$ yields
$0$, $-1$, $-2.29493$, $-0.854537$, $-0.244657 - 0.5601 i$,
$-0.244657 + 0.5601 i$, $0.424687$, $3.2141$.      

A subtle difficulty now lies in selecting among those points the dominant singularity of the analytic continuation of the solution~$S$ of~\eqref{eqn:S} corresponding to the combinatorial problem. Indeed, it is possible that one solution of~\eqref{eqn:S} is singular at a given~$r$ without the solution of interest being singular there. Considering such a singularity would result in an asymptotic expansion that is wrong by an exponential factor. One way to select the correct singularity is to apply a result by Meir and Moon~\cite{MeirMoon1989} to Equation~\eqref{eqn:S}. This results in a variant of the computation in~\cite{Clote.jcb06}.

Instead, we use Pringsheim's theorem (see, e.g., \cite{flajoletBook}).
\begin{theorem}[Pringsheim] If $S(z)$ has a series expansion at~0 that has nonnegative coefficients and a radius of convergence~$R$, then the point $z=R$ is a singularity of~$S(z)$.
\end{theorem}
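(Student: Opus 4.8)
The plan is to prove Pringsheim's theorem by contradiction, using the sign condition $a_n\ge 0$ to force convergence of the power series strictly past the candidate point. Write $S(z)=\sum_{n\ge 0}a_n z^n$ with all $a_n\ge 0$, and assume $0<R<\infty$, the only case that matters here (if $R=\infty$ there is nothing to prove, and $R=0$ is degenerate). Rescaling $z\mapsto z/R$ leaves the coefficients nonnegative and the conclusion unchanged, so we may assume $R=1$. Suppose, for contradiction, that $S$ is analytic at $z=1$. Then the germ of $S$ extends to a holomorphic function on some open set $U\supseteq\{|z|<1\}\cup D(1,\delta)$, where $D(1,\delta)$ is the disk of radius $\delta>0$ centered at $1$.

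Next I would re-expand $S$ about a real point $r\in(0,1)$ close to $1$ and estimate the radius of convergence of that Taylor series from below. Since $S$ is holomorphic on $U$, this radius is at least $\operatorname{dist}(r,\,\mathbb{C}\setminus U)$. A short computation shows that for real $r\in(0,1)$ the nearest point of $\mathbb{C}\setminus U=\{|z|\ge 1\}\setminus D(1,\delta)$ to $r$ lies on the circle $|z|=1$ at the edge of $D(1,\delta)$, at distance $\sqrt{(1-r)^2+r\delta^2}>1-r$. Hence the Taylor series $\sum_{k\ge 0}\frac{S^{(k)}(r)}{k!}(z-r)^k$ converges and represents $S$ on a disk about $r$ of radius exceeding $1-r$; in particular it converges at some real point $z^\ast$ with $1<z^\ast<r+\sqrt{(1-r)^2+r\delta^2}$.

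Now the sign condition enters. Because $a_n\ge 0$, every Taylor coefficient of $S$ at $r$ is nonnegative:
\[
\frac{S^{(k)}(r)}{k!}=\sum_{n\ge k}a_n\binom{n}{k}r^{\,n-k}\ge 0 .
\]
Evaluating the convergent Taylor series of $S$ at $r$ at the real point $z^\ast>r$ therefore produces a double series with only nonnegative terms, and a rearrangement of such a series (Tonelli) gives
\[
S(z^\ast)=\sum_{k\ge 0}\Bigl(\sum_{n\ge k}a_n\binom{n}{k}r^{\,n-k}\Bigr)(z^\ast-r)^k
=\sum_{n\ge 0}a_n\bigl(r+(z^\ast-r)\bigr)^n=\sum_{n\ge 0}a_n (z^\ast)^n .
\]
Thus $\sum_n a_n (z^\ast)^n$ converges for the real $z^\ast>1$, contradicting the fact that the radius of convergence of $\sum_n a_n z^n$ is $1$. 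Therefore $S$ is not analytic at $z=1$, i.e.\ $z=R$ is a singularity of $S$.

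The one genuinely delicate point is the geometric estimate in the second step: one must verify that analyticity at the single boundary point $z=1$ (together with analyticity on the open unit disk) really does push the radius of convergence of the re-expansion at a nearby real $r$ strictly above $1-r$, rather than merely up to $1-r$. Everything else is routine — the rescaling reduction is immediate, and the interchange of summation is justified purely by nonnegativity. An equivalent and perhaps tidier formulation of the second step is to first note that $S$ is holomorphic on an open neighborhood of the compact segment $[0,1]$, hence on a disk $D(r,1-r+\epsilon)$ for some $\epsilon>0$ once $r$ is taken close enough to $1$; either way the heart of the proof is the nonnegativity-driven rearrangement displayed above.
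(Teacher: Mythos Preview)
Your proof is correct and is essentially the classical argument for Pringsheim's theorem. However, the paper does not actually prove this theorem: it is quoted as a known result with a reference to Flajolet and Sedgewick's \emph{Analytic Combinatorics}, and is used as a black box to pin down the dominant singularity of the saturated-structure generating function. So there is no ``paper's own proof'' to compare against.

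For what it is worth, your argument matches the standard textbook treatment (including the one in the cited reference): assume analyticity at $z=R$, re-expand at a real $r$ just below $R$, use the extra room from the assumed analytic disk at $R$ to push the radius of convergence of the re-expansion past $R-r$, and then invoke nonnegativity to rearrange the double series and recover convergence of $\sum a_n z^n$ at a real point beyond $R$. The ``delicate'' geometric step you flag is fine as you wrote it; the computation $\operatorname{dist}(r,\{|z|\ge 1\}\setminus D(1,\delta))=\sqrt{(1-r)^2+r\delta^2}>1-r$ is correct (the minimum is attained at the intersection of the unit circle with $\partial D(1,\delta)$), and your alternative phrasing via a neighborhood of the segment $[0,1]$ is, as you say, tidier and avoids the explicit distance calculation altogether.
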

In our example, there are only two possible real positive singularities, $0.424687$ and $3.2141$. The latter cannot be dominant, since it would lead to asymptotics of the form~$3.2141^{-n}$, i.e., an exponentially decreasing number of structures. Thus the dominant singularity is at $\rho=0.424687$. Since the moduli of the non-real roots of~$Q$ is $0.611203>\rho$, the conditions of Theorem~\ref{thm:flajolet} hold, provided the function behaves as required as~$z\rightarrow\rho$.

\subsubsection*{Asymptotics}

%
We now compute the local expansion of~$S(z)$ at~$\rho$.
From equation (\ref{eqn:P_S}), we have that
\begin{equation}
\label{eqn:PrhoS}
P(\rho,S) = 0.605047- 0.819641 S + 0.328189 S^2 - 0.0325295 S^3
\end{equation}
whose (numerical approximations of) roots are the double root
$S= 1.6569$ and single root $S= 6.77518$. It is easily checked that
$1.6569$ is the only root of equation (\ref{eqn:PrhoS})
in which $P(\rho,S)$ is increasing; thus we let $T = 1.6569$.

Recall Taylor's theorem in two variables
\[
f(x,y)=
\sum_{n=0}^{\infty}\sum_{k=0}^{\infty}
\frac{\partial^{n+k}f(x_0,y_0)}
{\partial x^{n}\partial y^{k}} \cdot
\frac{(x-x_0)^{n}}{n!} \cdot\frac{(y-y_0)^{k}}{k!}.
\]
We now expand $P(z,S)$ at $z=\rho$ and $S=T$ and invert this expansion. This yields
\begin{equation}
P(z,S)
= P(\rho,T) + \frac{\partial P}{\partial S}(\rho,T)(S-T)
+ \frac{\partial P}{\partial z}(\rho,T)(z-\rho)
+\frac{1}{2} \frac{\partial^2 P}{\partial S^2}(\rho,T)(S-T)^2 + \cdots
\end{equation}
where the dots indicate terms of higher order.
The first two terms are $0$, so  by denoting
$P_z = \frac{\partial P}{\partial z}(\rho,T)$ and
$P_{SS}= \frac{\partial^2 P}{\partial S^2}(\rho,T)$, we have
\begin{equation}
0=P = P_z(z-\rho) +\frac{1}{2} P_{zz}(S-T)^2 + 
O(S-T)^3 + O((z-\rho)(S-T)^2) + O((z-\rho)^2).
\end{equation}
Isolating~$(S-T)^2$ we get
\begin{eqnarray*}
(S-T)^2 &=& \frac{-2P_z(z-\rho)}{P_{SS}} + O((z-\rho)^2) +O((S-T)^3)\\
S-T 
 &=& \pm \sqrt{\frac{2\rho P_z}{P_{SS}}} \cdot \sqrt{1-z/\rho} + O(z-\rho).
\end{eqnarray*}
%
%
Since $[z^n]S(z)$ is the number of saturated secondary 
structures on $[1,n]$ and the Taylor coefficients in the expansion of
$\sqrt{1-z/\rho}$ are negative, we discard the positive root and thus obtain
\begin{equation}\label{eqn:expansionS}
S-T = -\sqrt{\frac{2\rho P_z}{P_{SS}}} \cdot \sqrt{1-z/\rho} + O(z-\rho).
\end{equation}

We now make use of Theorem \ref{thm:flajolet} as before
and recover the following result, proved earlier in
\cite{Clote.jcb06} by the Bender-Meir-Moon method.
\begin{theorem}
\label{thm:numSatStr}
The asymptotic number of saturated structures is
$1.07427 \cdot n^{-3/2} \cdot 2.35468^n$.
\end{theorem}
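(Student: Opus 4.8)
The plan is to carry the local expansion \eqref{eqn:expansionS} through Theorem~\ref{thm:flajolet} and then substitute the numerical value of $\rho$. First I would verify that the hypotheses of Theorem~\ref{thm:flajolet} really do hold for the branch $S(z)$ of interest: the analysis in the preceding paragraphs already establishes that $\rho = 0.424687$ is the unique dominant singularity (Pringsheim's theorem rules out $3.2141$, and the non-real roots of $Q$ have modulus $0.611203 > \rho$), and that near $\rho$ we have $S(z) - T = -\sqrt{2\rho P_z / P_{SS}}\cdot\sqrt{1 - z/\rho} + O(z-\rho)$, which is exactly an algebraic singularity of the form \eqref{alg-sing} with $\alpha = 1/2$ and $K = -\sqrt{2\rho P_z / P_{SS}}$. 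Since $S(z)$ and $S(z) - T$ differ only in the constant term, they have the same Taylor coefficients $s_n$ for $n \ge 1$, so Theorem~\ref{thm:flajolet} applies to $S(z) - T$ and yields
\[
s_n \sim \frac{K}{\Gamma(-1/2)}\cdot n^{-3/2}\cdot \rho^{-n},\qquad n\to\infty,
\]
with $\Gamma(-1/2) = -2\sqrt{\pi}$.

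Next I would compute the constants numerically. From \eqref{eqn:resultantP} one gets $P_z = \frac{\partial P}{\partial z}(\rho,T)$ by differentiating $P(z,S) = -S^3 z^4 + z(1+z) - S^2 z^2(-2+z^2) + S(-1+z^2)$ with respect to $z$ and evaluating at $z = \rho = 0.424687$, $S = T = 1.6569$; similarly $P_{SS} = \frac{\partial^2 P}{\partial S^2}(\rho,T) = -6z^4 S - 2z^2(-2+z^2)$ evaluated at the same point. Then $K = -\sqrt{2\rho P_z / P_{SS}}$, and $K/\Gamma(-1/2) = -K/(2\sqrt{\pi})$; I expect this to come out to roughly $1.07427$. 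Finally $1/\rho = 1/0.424687 \approx 2.35468$, which supplies the exponential growth rate. Assembling these gives $s_n \sim 1.07427 \cdot n^{-3/2}\cdot 2.35468^n$, matching the value of \cite{Clote.jcb06}.

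The main obstacle is not any deep mathematics but rather the care needed in the singularity-selection step: one must be sure that the particular algebraic branch singular at $\rho$ is the one corresponding to the combinatorial generating function, since (as the excerpt warns) a spurious branch singular at some other root of $Q$ would corrupt the exponential factor. This is handled by Pringsheim's theorem together with the sign check that $P(\rho,S)$ is increasing at $T = 1.6569$ (so that $T$ is the value approached by the combinatorial branch as $z \uparrow \rho$), plus the observation that the positive square-root branch must be discarded because the coefficients of $\sqrt{1-z/\rho}$ are negative while $s_n > 0$. A secondary, purely bookkeeping point is confirming that the $O(z-\rho)$ error term in \eqref{eqn:expansionS} is genuinely analytic at $\rho$ (or at worst of strictly smaller singular order), so that it contributes only lower-order terms to $s_n$; this follows from the standard Newton–Puiseux analysis of \cite[\S VII.4]{flajoletBook} since $\partial P/\partial S$ vanishes only to first order at $(\rho,T)$ (the root $T$ of $P(\rho,S)$ is a double root, not a triple root, as checked after \eqref{eqn:PrhoS}).
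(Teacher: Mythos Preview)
Your proposal is correct and follows essentially the same route as the paper: having derived the local expansion \eqref{eqn:expansionS}, you invoke Theorem~\ref{thm:flajolet} with $\alpha=1/2$ and $K=-\sqrt{2\rho P_z/P_{SS}}$, compute $K/\Gamma(-1/2)$ and $1/\rho$ numerically, and assemble the result. Your discussion of the branch-selection issue (Pringsheim plus the monotonicity check at $T$, plus the sign choice in the square root) and of why the $O(z-\rho)$ remainder is harmless matches the paper's reasoning and in fact makes explicit a couple of points the paper leaves implicit.
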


\subsection{Expected number of base pairs of saturated structures}
\label{section:expNumBasePairsSatStr}

In this section, we compute the expected number of base pairs of saturated 
structures, proceeding as in Section~\ref{section:expBPcanonicalStr} by first modifying the equations to obtain bivariate generating functions and then differentiating with respect to the new variable and evaluating at~1 to obtain the asymptotic expectation.


\subsubsection*{Generating Functions}
We first modify
equations (\ref{eqn:grammarRulesS},\ref{eqn:grammarRulesR}) by 
introducing the auxiliary variable $u$, responsible for
counting the number of base pairs:
\begin{eqnarray}
\label{eqn:grammarRulesSU}
S &=& z+ z^2 + z R + z^2 R + uz^2 S + uz^2 S^2\\
\label{eqn:grammarRulesRU}
R &=& u z^2 S + u z^2 R S.
\end{eqnarray}

Solving the second equation for~$R$ and injecting into the first one gives
\begin{equation}\label{eq:PzuS}
P(z,u,S) = 
 S u z^2 (z + z^2) - (-1 + S u z^2) (-S + z + z^2 + S u z^2 + 
    S^2 u z^2).
\end{equation}                                            

\subsubsection*{Asymptotics}
We are interested in the coefficients of~$\partial S/\partial u$ at~$u=1$. Differentiating~\eqref{eq:PzuS} with respect to~$u$ gives
%
%
\[
\frac{\partial P}{\partial u} +
\frac{\partial P}{\partial S} 
\frac{\partial S}{\partial u}  = 0.
\]
Using equation (\ref{eqn:expansionS}), we replace $S(z,1)$ 
by $T+K\sqrt{1-z/\rho} + O(1-z/\rho)$ in this equation to obtain
\begin{multline*}
\left(\rho^2T(1+2(1-\rho^2)T-2\rho^2T^2)+O(\sqrt{1-z/\rho})\right)+\\
\left((4 K \rho^2 - 2 K \rho^4 - 6 K \rho^4 T) \sqrt{1 - z/\rho} +
O(1-z/\rho)\right)\left.\frac{\partial S}{\partial u}\right|_{u=1}=0
\end{multline*}
and finally

\begin{eqnarray}
\label{eqn:numeratorExpectedNumBasePairsInSatStr}
\frac{\partial S}{\partial u}\left(z,1 \right) 
&\sim& -\frac{0.642305}{\sqrt{1 - z/\rho}}.
\nonumber
\end{eqnarray}
Applying Theorem~\ref{thm:flajolet} to
equation (\ref{eqn:numeratorExpectedNumBasePairsInSatStr}) gives
\begin{eqnarray*}
\rho^n[z^n] \frac{\partial S}{\partial u}(z,1) 
\sim \frac{0.642305}{\Gamma(1/2)} \cdot n^{-1/2} 
= 0.362417 \cdot n^{-1/2}.
\end{eqnarray*}
It follows that the
asymptotic expected number of base pairs in saturated structures on
$[1,n]$ is
\begin{eqnarray*}
\frac{[z^n]\frac{\partial S(z,u)}{\partial u}(z,1)}{[z^n]S(z,1)}  
\sim \frac{ 0.362417 \cdot  n^{-1/2} \cdot \rho^{-n} }
{1.07427 \cdot n^{-3/2} \cdot \rho^{-n}} 
= 0.337361  \cdot  n
\end{eqnarray*}
We have just proved the following.
\begin{theorem}
\label{thm:expectedNumBasePairsInSatStr}
The asymptotic expected number of base pairs for saturated structures
is $0.337361  \cdot  n$.
\end{theorem}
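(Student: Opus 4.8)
The plan is to follow the same pipeline already used for the number of saturated structures in Theorem~\ref{thm:numSatStr}, but applied to the bivariate generating function. First I would introduce the variable $u$ marking base pairs, rewrite the grammar equations~\eqref{eqn:CFGforSatStrS}--\eqref{eqn:CFGforSatStrR} as~\eqref{eqn:grammarRulesSU}--\eqref{eqn:grammarRulesRU}, and eliminate $R$ to obtain the single polynomial relation $P(z,u,S)=0$ in~\eqref{eq:PzuS}. The key analytic object is $G(z):=\frac{\partial S}{\partial u}(z,1)$, whose Taylor coefficients give (after division by $s_n=[z^n]S(z,1)$) the expected number of base pairs, by the same ``differentiate a multivariate generating function'' observation used in Section~\ref{section:expBPcanonicalStr}.

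Next I would differentiate $P(z,u,S(z,u))=0$ with respect to $u$, giving $\frac{\partial P}{\partial u}+\frac{\partial P}{\partial S}\,\frac{\partial S}{\partial u}=0$, so that $G(z)=-\big(\partial P/\partial u\big)\big/\big(\partial P/\partial S\big)$ evaluated at $u=1$. The crucial step is to substitute the already-established local expansion $S(z,1)=T+K\sqrt{1-z/\rho}+O(1-z/\rho)$ from~\eqref{eqn:expansionS} (with $\rho=0.424687$, $T=1.6569$, $K=-\sqrt{2\rho P_z/P_{SS}}$) into both $\partial P/\partial u$ and $\partial P/\partial S$. One then checks that the numerator $\partial P/\partial u$ tends to a nonzero constant at $z=\rho$, while the denominator $\partial P/\partial S$ — which vanishes at $(\rho,T)$ since $\rho$ is the dominant singularity and $T$ the corresponding double root — behaves like $(\text{const})\cdot\sqrt{1-z/\rho}$. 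Hence $G(z)\sim c\,(1-z/\rho)^{-1/2}$ for an explicit constant $c$ (numerically $c\approx-0.642305$); I would then invoke Theorem~\ref{thm:flajolet} with $\alpha=-1/2$ to get $[z^n]G(z)\sim \frac{c}{\Gamma(1/2)}\,n^{-1/2}\rho^{-n}$. Dividing by the asymptotics of $s_n$ from Theorem~\ref{thm:numSatStr} cancels the $\rho^{-n}$ factors, turns $n^{-1/2}/n^{-3/2}$ into $n$, and yields the stated coefficient $0.337361$.

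The main obstacle is the middle step: one has to verify that after substituting the square-root expansion of $S(z,1)$, the leading term of $\partial P/\partial S$ along the solution branch is genuinely of order $(1-z/\rho)^{1/2}$ and not lower order, and compute its coefficient correctly. Concretely, $\frac{\partial P}{\partial S}(\rho,T)=0$, so $\frac{\partial P}{\partial S}(z,S(z,1))=\frac{\partial^2 P}{\partial S^2}(\rho,T)\,(S(z,1)-T)+\frac{\partial^2 P}{\partial S\partial z}(\rho,T)(z-\rho)+\cdots$, and the dominant contribution comes from the $(S-T)$ term, which is $\Theta(\sqrt{1-z/\rho})$; the coefficient is then $P_{SS}\cdot K=-P_{SS}\sqrt{2\rho P_z/P_{SS}}=-\sqrt{2\rho P_z P_{SS}}$. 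Keeping track of the several partial derivatives of the cubic $P$ at $(\rho,T)$ and of the intermediate numerical values (the $4K\rho^2-2K\rho^4-6K\rho^4 T$ appearing in the displayed computation) is error-prone but entirely routine; it is exactly the kind of bookkeeping that is deferred to the supplementary Mathematica code. Everything else — analyticity of $S$ in the slit disk $\triangle$, the single dominant singularity at $\rho$, the sign choice of the square root — is inherited verbatim from the analysis preceding Theorem~\ref{thm:numSatStr}, so no new conceptual input is needed beyond confirming that the numerator does not also vanish at $\rho$ (which it does not, since the expected number of base pairs is a positive fraction of $n$).
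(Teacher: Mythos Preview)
Your proposal is correct and follows essentially the same route as the paper: differentiate the implicit relation $P(z,u,S)=0$ with respect to~$u$, substitute the square-root expansion~\eqref{eqn:expansionS} of $S(z,1)$ near $\rho$ so that $\partial P/\partial S$ contributes a factor $(1-z/\rho)^{1/2}$ while $\partial P/\partial u$ stays nonzero, apply Theorem~\ref{thm:flajolet} with $\alpha=-1/2$, and divide by the asymptotics of~$s_n$. The paper carries out exactly this computation, arriving at the same intermediate constant $-0.642305$ and the same displayed coefficient $4K\rho^2-2K\rho^4-6K\rho^4T$ that you anticipate.
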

Since the Taylor coefficient
$s_{n,k}$ of generating function $S(z,u) = \sum_{n,k} s_{n,k}z^n u^k$
is equal to the number of saturated structures having $k$ base pairs,
it is possible that the methods of this section will suffice to solve
the following open problem.
\begin{openproblem}
Clearly, the maximum number of base pairs in a saturated structure on
$[1,n]$ where $\theta=1$ is $\lfloor \frac{n-1}{2} \rfloor$.
For fixed values of $k$, what is the asymptotic number $s_{n,\lfloor (n-1)/2
\rfloor - k}$ of saturated secondary structures having exactly
$k$ base pairs fewer than the maximum?
\end{openproblem}
Note that in \cite{Clote.jcb06}, we solved this problem for $k=0,1$.

A related interesting question concerns whether the number of secondary
structures $s_{n,k}$ having $k$ base pairs is approximately Gaussian.
As first suggested by Y. Ponty (personal communication), this is indeed
the case. More formally, consider for fixed $n$ the 
the finite distribution $\mathbb{P}_n = p_1,\ldots,p_n$, where
$p_k = s_{n,k}/s_n$ and $s_n = \sum_k s_{n,k}$. In the Nussinov energy
model, the energy of a secondary structure with $k$ base pairs is $-k$,
so the distribution $\mathbb{P}_n$ is what is usually called the
{\em density of states} in physical chemistry. It follows from Theorem 1 of
of Drmota \cite{drmota} (see also \cite{drmota1994})
that $\mathbb{P}_n$ is Gaussian. Similarly, it
follows from Theorem 1 of Drmota that the asymptotic distribution of
density of states of both canonical and saturated structures is Gaussian.
Details of a Maple session applying Drmota's theorem to saturated structures
appears in the web supplement
\url{http://bioinformatics.bc.edu/clotelab/SUPPLEMENTS/JBCBasymptotics/}.

\subsection{Asymptotic number of saturated stem-loops}
\label{section:saturatedStemLoops}

Define a {\em stem-loop} to be a secondary structure $S$
having a unique base pair $(i_0,j_0)\in S$, for which all other
base pairs $(i,j) \in S$ satisfy the relation $i < i_0 < j_0 < j$.
In this case, $(i_0,j_0)$ defines a hairpin, and the remaining base pairs,
as well as possible internal loops and bulges, constitute the stem.
We have the following simple result due to Stein and Waterman
\cite{steinWaterman}.
\begin{proposition}
There are $2^{n-2}-1$ stem-loop structures\footnote{In \cite{steinWaterman},
stem-loop structures are called {\em hairpins}. Since the appearance of
\cite{steinWaterman}, common convention is that a hairpin is a structure
consisting of a single base pair enclosing a loop region; i.e.
$\op \ub \cdots \ub \cp$. Here we use the more proper term {\em stem-loop}.}
on $[1,n]$.
\end{proposition}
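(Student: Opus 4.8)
The plan is to count stem-loop structures directly, stratifying by the number $m\ge1$ of base pairs. Since no pseudoknots are allowed and, by definition, every base pair other than $(i_0,j_0)$ must enclose $(i_0,j_0)$, a stem-loop structure with $m$ base pairs consists of $m$ base pairs forming a single nested chain, so in Vienna notation it has the shape
\[
\bullet^{a_0}\op\bullet^{a_1}\op\bullet^{a_2}\cdots\op\bullet^{a_{m-1}}\op\bullet^{a_m}\cp\bullet^{b_{m-1}}\cp\cdots\cp\bullet^{b_1}\cp\bullet^{b_0},
\]
where the $m$ opening parentheses match the $m$ closing parentheses in nested order, $a_0,b_0\ge0$ count the external unpaired bases, $a_k,b_k\ge0$ for $1\le k\le m-1$ count the unpaired bases of the $k$-th bulge/interior loop, and $a_m$ counts the unpaired bases of the hairpin loop. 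First I would check that this is a bijection between stem-loop structures on $[1,n]$ having exactly $m$ base pairs and integer tuples $(a_0,\ldots,a_m,b_0,\ldots,b_{m-1})$ with all entries $\ge0$, with $a_0+a_1+\cdots+a_m+b_0+\cdots+b_{m-1}=n-2m$, and with the single extra constraint $a_m\ge1$. This constraint is the entire content of ``being a valid structure'': freedom from pseudoknots and base triples is automatic from the nested layout, while the threshold requirement $j-i>\theta=1$ need only be verified for the innermost pair $(i_0,j_0)=(i_m,j_m)$, since every other pair spans strictly more positions; and that is exactly the statement that the hairpin loop is nonempty, i.e.\ $a_m\ge1$.

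Next, for fixed $m$ I would count such tuples. Setting $a_m=1+a_m'$ with $a_m'\ge0$ turns the problem into counting nonnegative integer solutions of $a_0+\cdots+a_{m-1}+a_m'+b_0+\cdots+b_{m-1}=n-2m-1$ in $2m+1$ unknowns, which by stars and bars is $\binom{(n-2m-1)+2m}{2m}=\binom{n-1}{2m}$. Hence there are exactly $\binom{n-1}{2m}$ stem-loop structures on $[1,n]$ with $m$ base pairs.

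Finally I would sum over the admissible range $1\le m\le\lfloor(n-1)/2\rfloor$ and invoke the elementary identity $\sum_{m\ge0}\binom{N}{2m}=2^{N-1}$ for $N\ge1$ (obtained from $(1+1)^N$ and $(1-1)^N$), applied with $N=n-1$:
\[
\#\{\text{stem-loops on }[1,n]\}=\sum_{m\ge1}\binom{n-1}{2m}=\Bigl(\sum_{m\ge0}\binom{n-1}{2m}\Bigr)-\binom{n-1}{0}=2^{n-2}-1,
\]
which is the asserted count. As a cross-check, one can instead apply the DSV methodology to the unambiguous grammar $H\to D\op T\cp D$ and $T\to E\,|\,D\op T\cp D$, where the auxiliary nonterminal $D$ generates runs of zero or more dots and $E$ runs of one or more dots; this yields $H(z)=z^3/\bigl((1-z)(1-2z)\bigr)$, and the partial-fraction form $z^3\bigl(2/(1-2z)-1/(1-z)\bigr)$ again gives $[z^n]H(z)=2^{n-2}-1$.

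I expect the only delicate point to be the bookkeeping of the bijection: confirming that there are $2m+1$ slots for unpaired bases (not $2m$ or $2m+2$) and that the hairpin slot is the unique one forced to be nonempty. This is exactly where the shift from the naive $2^{n-1}$ down to $2^{n-2}-1$ originates.
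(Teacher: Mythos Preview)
Your proof is correct but proceeds along a different line than the paper's. The paper argues by recurrence: letting $L(n)$ count secondary structures with \emph{at most} one loop (so including the all-dots structure), it conditions on whether the rightmost position $n+1$ is unpaired or is paired to some $j\le n-1$; in the latter case the positions left of $j$ must all be unpaired, giving $L(n+1)=L(n)+L(n-1)+\cdots+L(1)$ with $L(1)=L(2)=1$, which yields $L(n)=2^{n-2}$ for $n\ge2$ by induction, and subtracting the empty structure gives $2^{n-2}-1$. You instead stratify directly by the number $m$ of base pairs, read off the $2m+1$ slots of unpaired bases from the Vienna string, apply stars and bars to obtain $\binom{n-1}{2m}$ stem-loops with exactly $m$ pairs, and then sum over $m\ge1$ via the even-binomial identity.

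The paper's recurrence is shorter and closer in spirit to the recursive decompositions used throughout the article. Your direct count has the advantage of yielding, essentially for free, the refined enumeration $\binom{n-1}{2m}$ of stem-loops with exactly $m$ base pairs, and your generating-function cross-check via $H(z)=z^3/((1-z)(1-2z))$ provides an independent confirmation that the paper does not supply for this particular proposition.
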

\begin{proof}{}
Let $L(n)$ denote the number
of secondary structures with {\em at most} one
loop on $(1,\ldots,n)$. Then
$L(1) = 1 = L(2)$.
There are two cases to consider for $L(n+1)$.
\medskip

\noindent
{\sc Case 1.}
If $n+1$ does not form a base pair,
then we have a contribution of $L(n)$.
\medskip

\noindent
{\sc Case 2.}
$n+1$ forms a base pair with some $1 \leq j \leq n-1$.
In this case, since only one hairpin loop is allowed,
there is no base-pairing for the subsequence
$s_1,\ldots,s_{j-1}$, and hence if $n+1$ base-pairs with $j$, then
we have a contribution of
$L(n-(j+1)+1) = L(n-j)$.
Hence
\begin{eqnarray*}
L(n+1) & = & L(n) + \sum\limits_{j=1}^{n-1} \, L(n-j)\\
       & = & L(n) + L(n-1)+ \cdots + L(1)
\end{eqnarray*}
and hence
$L(1)=1$, $L(2)=1$, $L(3)=2$, 
and from there $L(n) = 2^{n-2}$ by induction.
\end{proof}
We now compute the asymptotic number of {\em saturated} stem-loop structures.
Let $h(n)$ be the number of saturated stem-loops on $[1,n]$, defined by
$h(n)=1$ for $n=0,1,2,3$, $h(4)=3$,
and
\begin{eqnarray}
\label{eqn:numSaturatedStemLoops}
h(n) = h(n-2)+2 h(n-3)+2 h(n-4)
\end{eqnarray}
for $n \geq 5$. Note that we have defined $h(1)=1=h(2)$ for notational
ease in the sequel, although there are in fact no stem-loops of size $1$
or $2$. Indeed in this case, the only structures of size $1$ respectively $2$ are
$\ub$ and $\ub \ub$.

The first few terms in the sequence
$h(1), h(2), h(3), \cdots$ are 
$1$, $1$, $1$, $3$, $5$, $7$, $13$,
$23$, $37$, $63$, $109$, $183$, $309$,
$527$, $893$, $1511$, $2565$, $4351$, $7373$, $12503$;
for instance, $h(20)=12503$.
\subsubsection*{Grammar}
It is easily seen that the following rules
\[
S \rightarrow \ub | \ub \ub | 
\op S \cp | \ub \op S \cp | \ub \ub \op S \cp | \op S \cp \ub | 
\op S \cp \ub \ub 
\]
provide for a non-ambiguous context-free grammar to generate all non-empty
saturated stem-loops. It defines actually a special kind of context-free language, called regular, whose generating function is rational.  
\subsubsection*{Generating Function}
By the DSV methodology, we obtain the functional relation
\[
R(z) = z + z^2 +R(z) z^2 + 2 R(z) z^3 + 2 R(z) z^4
\]
whose solution is the rational function
\begin{eqnarray}
\label{eqn:rationalFunctionNumSatStemLoops}
R(z) = \frac{P(z)}{Q(z)} = \frac{z}{1-z-2 z^3}
\end{eqnarray}
where $P(z)=z$ and $Q(z)=1-z -2 z^3$.
\subsubsection*{Asymptotics}
For rational functions, an easy way to compute the asymptotic behaviour of the Taylor coefficients is to compute a partial fraction decomposition and isolate the dominant part. This is equivalent to solving the corresponding linear recurrence. See also~\cite[p.~325]{grahamKnuthPatashnik} or \cite[Thm. 9.2]{odlyzko:HandbookOfCombinatorics}.

Partial fraction decomposition yields
\[R(z)=\frac{A(a_1)}{1-z/a_1}+\frac{A(a_2)}{1-z/a_2}+\frac{A(a_3)}{1-z/a_3},\]
where the~$a_i$s are the roots of~$Q$ and $A(z)=-1/Q'(z)$. It follows by extracting coefficients that
\[h(n)=A(a_1)a_1^{-n}+A(a_2)a_2^{-n}+A(a_3)a_3^{-n}.\]
(Note that this is an actual equality valid for all $n\ge0$ and not an asymptotic result).
Now, the roots of~$Q$ are approximately
\[a_1=0.5897545,\quad a_2=-0.294877 - 0.872272 i,\quad a_3=-0.294877 + 0.872272 i.\] 
Since $|a_2|=|a_3|=.9207>|a_1|$, it follows that the asymptotic behaviour is given by the term in~$a_1$. 

We have proved the following theorem.
\begin{theorem}
\label{thm:numberCanonicalStemLoops}
The number $h(n)$ of saturated stem-loops on
$[1,n]$ satisfies
\begin{eqnarray}
\label{eqn:closedFormNumSatStemLoops}
h(n) \sim  0.323954 \cdot 1.69562^n.
\end{eqnarray}
\end{theorem}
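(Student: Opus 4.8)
The plan is to follow the DSV pipeline already used throughout the paper, specialized to the regular (hence rational) case. First I would verify that the grammar
\[
S \rightarrow \ub \mid \ub\ub \mid \op S \cp \mid \ub \op S \cp \mid \ub\ub \op S \cp \mid \op S \cp \ub \mid \op S \cp \ub\ub
\]
is non-ambiguous and generates exactly the non-empty saturated stem-loops. The key structural fact is that a saturated stem-loop has a unique innermost base pair, whose hairpin loop must contain exactly one or two unpaired bases (three or more would allow adding a base pair, violating saturation, while zero violates $\theta=1$); this gives the base cases $\ub$ and $\ub\ub$ inside the innermost pair. Reading outward, each successive base pair $(i,j)$ can be flanked on its $5'$ side, its $3'$ side, or neither, by one or two unpaired bases — but never three (else a base pair could be inserted) and never by unpaired bases on opposite sides of the same enclosing pair simultaneously in a way that creates an internal loop admitting a new pair; tracking these cases against the seven production rules and arguing by induction on length establishes both the language identity and non-ambiguity, exactly mirroring the inductive arguments given earlier for canonical and saturated structures. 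One then checks directly that the recurrence $h(n)=h(n-2)+2h(n-3)+2h(n-4)$ with the stated initial conditions reproduces the small values $1,1,1,3,5,7,13,23,\dots$ listed in the paper.

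Next I would apply Table~\ref{table:DSV} to the grammar. The two rules producing a terminal $\ub$ contribute $z$; the rule $\ub\ub$ contributes $z^2$; and the four rules of the form $(\text{prefix})\,\op S\cp\,(\text{suffix})$ contribute $z^2 R(z)$, $z^3 R(z)$ (twice), and $z^4 R(z)$ (twice). Summing gives
\[
R(z) = z + z^2 + R(z)z^2 + 2R(z)z^3 + 2R(z)z^4,
\]
and solving for $R(z)$ yields $R(z) = (z+z^2)/(1 - z^2 - 2z^3 - 2z^4)$. I would then simplify: the numerator and denominator share the factor $(1+z)$, since $1 - z^2 - 2z^3 - 2z^4 = (1+z)(1 - z - 2z^3)$, and $z + z^2 = z(1+z)$; cancelling gives the claimed rational function $R(z) = z/(1 - z - 2z^3)$, with $Q(z) = 1 - z - 2z^3$. (A cheap sanity check: expanding $z/(1-z-2z^3)$ as a power series must reproduce $h(n)$, equivalently the coefficients satisfy the linear recurrence encoded by $Q$, which matches~\eqref{eqn:numSaturatedStemLoops}.)

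Finally, for the asymptotics I would invoke the partial-fraction method for rational generating functions (as the paper indicates, citing \cite{grahamKnuthPatashnik,odlyzko:HandbookOfCombinatorics}): write $R(z) = \sum_{i=1}^3 A(a_i)/(1 - z/a_i)$ where $a_1,a_2,a_3$ are the roots of $Q$ and $A(z) = -1/Q'(z) = -1/(-1 - 6z^2) = 1/(1+6z^2)$, giving the exact formula $h(n) = \sum_{i=1}^3 A(a_i) a_i^{-n}$. A numerical root-finding computation gives $a_1 \approx 0.5897545$ real and a complex-conjugate pair $a_2,a_3 \approx -0.294877 \pm 0.872272\,i$ with $|a_2| = |a_3| \approx 0.9207 > a_1$, so the $a_1$-term dominates. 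Then $h(n) \sim A(a_1) a_1^{-n}$, and plugging in $a_1^{-1} \approx 1.69562$ and $A(a_1) = 1/(1 + 6a_1^2) \approx 0.323954$ yields the stated estimate $h(n) \sim 0.323954 \cdot 1.69562^n$. Alternatively one can reach the same conclusion via Theorem~\ref{thm:flajolet} with $\alpha = -1$ at the simple pole $z = a_1$. I expect the only genuine work to be in the first paragraph — pinning down the saturation constraints precisely enough to justify that these seven rules, and no others, are needed and that the grammar is unambiguous; the generating-function manipulation and the rational-asymptotics step are routine.
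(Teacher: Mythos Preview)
Your proposal is correct and follows essentially the same route as the paper: the same seven-rule grammar, the same functional equation $R(z)=z+z^2+R(z)z^2+2R(z)z^3+2R(z)z^4$, the same partial-fraction argument with $A(z)=-1/Q'(z)$, and the same numerical identification of the dominant root $a_1\approx 0.5897545$. The only visible difference is cosmetic: you explicitly display and cancel the common factor $(1+z)$ in $(z+z^2)/(1-z^2-2z^3-2z^4)$ to reach $z/(1-z-2z^3)$, whereas the paper states the simplified rational function directly.
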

Convergence of the asymptotic limit
in equation (\ref{eqn:closedFormNumSatStemLoops}) is
exponentially fast, so that when $n=20$,
$0.323954 \cdot 1.69562^n=12504.2$, while the exact number of
saturated stem-loops on $[1,20]$ is $h(20)=12503$.

\section{Quasi-random saturated structures}
\label{qUniform distribution}
\newcommand{\qsat}{quasi saturated }

In this section, we define a stochastic
greedy process to generate {\em random} saturated structures, 
technically denoted {\em quasi-random saturated structures}.
Our main result is that the expected number of base pairs in
quasi-random saturated structures is $0.0.340633 \cdot n$, just slightly more 
than the expected number $0.337361 \cdot n$ of all saturated structures.
This suggests that the introduction of stochastic greedy algorithms and
their asymptotic analysis may prove useful in other areas of
random graph theory.

Consider the following stochastic process to generate a saturated 
structure. Suppose that $n$ bases are arranged in sequential order on a line.
Select the base pair $(1, u)$ by choosing
$u$, where $\theta +2 \leq u \leq n$, at random with  
probability $1 / (n-\theta -1)$. 
\begin{figure}[htb!]
\begin{center}
\includegraphics[width=10cm]{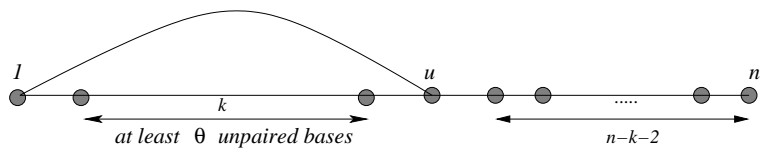}
\caption{Base $1$ is base-paired by selecting a random base $u$ such there
are at least $\theta$ unpaired bases enclosed between $1$ and $u$.}
\label{line}
\end{center}
\end{figure}
The base pair
joining $1$ and $u$ partitions the line into two parts. The left region
has $k$ bases strictly between $1$ and $u$, where
$k \geq \theta $, and the right region contains the remaining
$n-k-2$ bases properly contained within endpoints $k+2$ and $n$ 
(see Figure~\ref{line}). Proceed recursively on each of the two parts.
Observe that the secondary structures produced
by our stochastic process will always base pair with the leftmost
available base, and that the resulting structure is always saturated.

Before proceeding further, we note that the probability that the
probability $p_{i,j}$ that $(i,j)$ 
is a base pair in a saturated structure is {\em not} the same
as the probability $q_{i,j}$ that $(i,j)$ 
is a base pair in a quasi-random saturated structure.  Indeed, if
we consider saturated and quasi-random saturated structures on an
RNA sequence of length $n=10$, then clearly $p_{1,5}=1/29$ while
clearly $q_{1,5} = 1/8$.\footnote{The web supplement contains a Python
program to compute the number of saturated structures on $n$. Clearly
$p_{1,5} = \frac{s_3 \cdot s_5}{s_{10}}$, where $s_k$ denotes the number
of saturated structures on an RNA sequence of length $k$. A computation
from a Python program (see web supplement) shows that $s_3=1$, $s_5=5$ and
$s_{10}=145$, hence $p_{1,5} = 5/145 = 1/29$.} Despite the very different
base pairing probabilities when comparing saturated with quasi-random
saturated structures, it is remarkable that the expected number of base
pairs over saturated and quasi-random saturated structures is numerically
so close.

Let $U^{\theta}_n$ be the expected number of base pairs of the 
saturated secondary
structure generated by this recursive procedure.
In general, we have the following recursive equation 
\begin{eqnarray}
U^{\theta}_n  
&=& \label{qeq3b}
1 + \frac{1}{n-\theta -1} \sum_{k=\theta}^{n-2} 
(U^{\theta}_k +U^{\theta}_{n-k-2}),\qquad n\ge\theta+2,
\end{eqnarray}
with initial conditions
\begin{equation}\label{eq:inicond}
 U^{\theta}_0 =  U^{\theta}_1 = \cdots =  U^{\theta}_{\theta+1} = 0,\quad
U^{\theta}_{\theta+2} = U^{\theta}_{\theta+3} = 1.
\end{equation}
If we write
equation (\ref{qeq3b}) for $ U^{\theta}_{n+1}$ and substitute in it the value for $ U^{\theta}_n$ 
we derive
\begin{eqnarray*}
 U^{\theta}_{n+1} 
& = & 1 +  \frac{1}{n-\theta} \sum_{k=\theta}^{n-1}  (U^{\theta}_k +U^{\theta}_{n-k-1})\\
& = & 1 +  \frac{1}{n-\theta} \left( U^{\theta}_{n-1} + U^{\theta}_{n-\theta-1} +
\sum_{k=\theta}^{n-2}  (U^{\theta}_k +U^{\theta}_{n-k-2}) \right)\\
& = & 1 +  \frac{1}{n-\theta} \left( U^{\theta}_{n-1} + U^{\theta}_{n-\theta-1} \right) +
\frac{n-\theta-1}{n-\theta} (U^{\theta}_n -1 )  .
\end{eqnarray*}
If we multiply out by $n-\theta$ and simplify we obtain
\begin{equation}
\label{qeq4b}
(n-\theta)  U^{\theta}_{n+1} = 1 + (n-\theta-1)  U^{\theta}_n + U^{\theta}_{n-1} + U^{\theta}_{n-\theta-1},
\end{equation}
which is valid for $n \geq \theta +1$.

\subsection{Asymptotic behavior}

We now look at asymptotics. In particular we prove
the following result.
\begin{theorem}
\label{qgen:thm}
Let $U_n^\theta$ denote the expected number of base pairs for quasi-random
saturated structures of an RNA sequence of length $n$. 
Then for fixed~$\theta$, and as $n\rightarrow\infty$
\begin{equation}
\label{qeq:gen} 
U_n^\theta \sim K_\theta\cdot n\qquad\text{with}\qquad K_\theta=e^{-1-H_{\theta+1}}\int_0^1{e^{t+(t+t^2/2+\dots+t^{\theta+1}/(\theta+1))}\,dt},
\end{equation}
where 
$H_{\theta+1}=1+\frac{1}{2}+\dots+\frac{1}{\theta+1}$ is the $(\theta+1)$th harmonic number.
\end{theorem}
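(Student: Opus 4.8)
The plan is to convert the recurrence~\eqref{qeq4b} into a differential equation for a generating function, solve that differential equation explicitly, and then read off the asymptotics of the coefficients by a singularity analysis. First I would introduce the generating function $U^\theta(z)=\sum_{n\ge0}U_n^\theta z^n$ and translate the three-term recurrence $(n-\theta)U_{n+1}^\theta = 1+(n-\theta-1)U_n^\theta+U_{n-1}^\theta+U_{n-\theta-1}^\theta$ (valid for $n\ge\theta+1$) into an equation for $U^\theta(z)$. The factors $(n-\theta)$ and $(n-\theta-1)$ multiplying the unknown coefficients mean that the translation produces terms of the form $z U'(z)$ (from $\sum n U_n z^n$) together with polynomial-in-$z$ multiples of $U(z)$ and a rational contribution coming from the constant~$1$ on the right-hand side. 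Being careful with the initial conditions~\eqref{eq:inicond}, this yields a first-order linear ODE of the shape $a(z)U'(z)+b(z)U(z)=c(z)$ with $a,b,c$ explicit polynomials/rational functions depending on $\theta$.

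Next I would solve this first-order linear ODE by the integrating-factor method. The homogeneous part integrates to an elementary factor; since $b(z)/a(z)$ will be a rational function whose partial fraction expansion has a simple pole at $z=1$ with residue related to $\theta$ (producing a factor $(1-z)^{-(\theta+2)}$ or similar) plus a polynomial part (producing the $e^{t+t^2/2+\cdots}$-type exponential), the integrating factor is exactly of the form that makes the constants $H_{\theta+1}$ and the exponential polynomial $t+t^2/2+\cdots+t^{\theta+1}/(\theta+1)$ appear. Integrating against $c(z)$ then gives $U^\theta(z)$ as (elementary factor) times $\int_0^z(\cdots)\,dt$, and the dominant singularity is at $z=1$.

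Finally I would extract the asymptotics. Near $z=1$ the solution behaves like $C_\theta(1-z)^{-2}$ — the exponent $-2$ being what produces the linear growth $U_n^\theta\sim K_\theta\cdot n$ after applying Theorem~\ref{thm:flajolet} (with $\alpha=-2$, using $1/\Gamma(2)=1$) — and the constant $C_\theta$ is precisely the boundary value at $z=1$ of the integrating-factor expression, which after the change of variables $t\mapsto$ the integration variable becomes $e^{-1-H_{\theta+1}}\int_0^1 e^{t+(t+t^2/2+\cdots+t^{\theta+1}/(\theta+1))}\,dt$. I expect the main obstacle to be purely bookkeeping rather than conceptual: getting the ODE exactly right, including the correct polynomial coefficients and the rational inhomogeneous term, requires careful handling of the index shifts and of the finitely many initial terms in~\eqref{eq:inicond} that are not governed by the recurrence; a secondary subtlety is verifying that $z=1$ is genuinely the dominant singularity of the solution branch of combinatorial interest (the exponential factor $e^{t+\cdots}$ is entire, so it contributes no singularity, and the only other candidate poles come from roots of $a(z)$, which one must check lie outside the unit disc or are cancelled), so that Theorem~\ref{thm:flajolet} applies with a $\triangle$-domain as required.
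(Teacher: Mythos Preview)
Your proposal is correct and follows essentially the same route as the paper: translate the recurrence into a first-order linear ODE for the generating function, solve it by an integrating factor (equivalently, variation of the constant) using a partial-fraction decomposition, and read off the $(1-z)^{-2}$ behaviour at the dominant singularity $z=1$ to conclude via Theorem~\ref{thm:flajolet}. One small correction to your heuristic: the residue of $W'/W$ at $z=1$ turns out to be $2$, not $\theta$-dependent, so the factor is $(1-z)^{-2}$ rather than $(1-z)^{-(\theta+2)}$; and since $a(z)=z(1-z)$, there are no spurious singularities to rule out beyond $z=0$ and $z=1$.
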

The first few values can easily be obtained numerically and we have
\[K_1=0.340633,\quad K_2=0.285497,\quad K_3=0.247908,\quad K_4=0.220308,\quad K_5=0.199018.\]
\begin{proof}
For fixed integer~$\theta$, the recurrence~\eqref{qeq4b} is linear with polynomial coefficients. It is a classical result that the generating functions of solutions of such recurrences satisfy linear differential equations.  This is obtained by applying the following rules: if $U(z)=\sum_{n\ge0}u_nz^n$, then
\[\sum_{n\ge0}nu_nz^n=zU'(z),\qquad \sum_{n\ge 0}u_{n+k}z^n=\frac{1}{z^k}(U(z)-u_0-u_1z-\dots-u_{k-1}z^{k-1}).\]
Starting from~\eqref{qeq4b}, we first shift the index by~$\theta+1$ and apply these rules together with the initial conditions~\eqref{eq:inicond} to get
\begin{align*}
	(n+\theta+2)U_{n+\theta+2}^\theta-(\theta+1)U_{n+\theta+2}^\theta&=1+(n+\theta+1)U_{n+\theta+1}^\theta-(\theta+1)U_{n+\theta+1}^\theta+U_{n+\theta}^\theta+U_{n}^\theta,\\
	\frac{1}{z^{\theta+2}}zy'-(\theta+1)\frac{y}{z^{\theta+2}}&=\frac{1}{1-z}+\frac{1}{z^{\theta+1}}zy'-(\theta+1)\frac{y}{z^{\theta+1}}+\frac{y}{z^\theta}+y.
\end{align*}
Finally, this simplifies to

\begin{equation}
\label{qeq7b}
z(1-z)y' +((\theta+1)(z-1)-z^{2}-z^{\theta+2}) y =\frac{z^{\theta+2}}{1-z}.
\end{equation}
This is a first order non-homogeneous linear differential equation. The homogeneous part 
\[z(1-z)W' +((\theta+1)(z-1)-z^{2}-z^{\theta+2}) W =0\]
is solved by integrating a partial fraction decomposition 
\begin{align*}
\frac{W'(z)}{W(z)}&=\frac{\theta+1}{z}-\frac{z}{z-1}-\frac{z^{\theta+1}}{z-1}\\
&=\frac{\theta+1}{z}+\frac{2}{z-1}-1-(1+z+\dots+z^{\theta})\\
\log W&=(\theta+1)\log z-2\log(1-z)-z-(z+z^2/2+\dots+z^{\theta+1}/(\theta+1)),\\
W(z)&=\frac{z^{\theta+1}}{(1-z)^2}e^{-z-(z+z^2/2+\dots+z^{\theta+1}/(\theta+1))}.
\end{align*}
From there, variation of the constant gives the following expression for the generating function:
\[y=\frac{z^{\theta+1}}{(1-z)^2}e^{-z-(z+z^2/2+\dots+z^{\theta+1}/(\theta+1))}\int_0^z{e^{t+(t+t^2/2+\dots+t^{\theta+1}/(\theta+1)}\,dt}.\]
Because the exponential is an entire function, we readily find that the only singularity is at~$z=1$, where
$y\sim {K}/{(1-z)^2}$ with $K$ as in the statement of the theorem.
%
The proof is completed by the use of  
Theorem~\ref{thm:flajolet}.
\end{proof}

Note that the
asymptotic expected number of base pairs in quasi-random saturated structures with $\theta=1$
is $0.340633 \cdot n$, while by
Theorem \ref{thm:expectedNumBasePairsInSatStr} the
asymptotic expected number of base pairs in saturated structures is
$0.337361 \cdot n$, just very slightly less. This result
points out that the stochastic greedy method performs reasonably well
in sampling saturated structures, although the stochastic process
tends not to sample certain (rare)
saturated structures having a less than average number of
base pairs.

The stochastic process used to construct quasi-random saturated
structures iteratively base-pairs the leftmost position in each subinterval.
One can imaging a more general stochastic method of constructing
saturated structures, described as follows.
Generate an initial list $L$ of all allowable base pairs $(i,j)$
with $1 \leq i < j \leq n$ and $j \geq i+\theta+1$.
Create a saturated structure by repeately picking a base pair from 
$L$, adding it to an initially empty structure $S$,
then removing from $L$ all base pairs that form a crossing 
(pseudoknot) with the base pair just selected. This ensures
that the next time a base pair from $L$, it can be added to $S$
without violating the definition of secondary structure.
Iterate this procedure until $L$ is empty to form the stochastic saturated
structure $S$.

Taking an average over 100 repetitions, we have computed the
average number of base pairs and the standard deviation for
$n =10, 100, 1000$. Results are $\mu = 0.323$, $\sigma= 0.0604$  for $n=10$, 
$\mu = 0.3526$, $\sigma= 0.0386$ for $n=100$
and $\mu = 0.35618$, $\sigma= 0.0361$ for $n=1000$.
This clearly is a different stochastic process than that used for
quasi-random saturated structures.

\section{Conclusion}\label{conclusion}

In this paper we applied the DSV methodology and the Flajolet-Odlyzko
theorem to asymptotic enumeration problems
concerning  canonical and saturated secondary structures.
For instance, we showed that
the expected number of base pairs in 
canonical RNA secondary structures is equal to $0.31724 \cdot n$,
which is far less than the expected number 
$0.495917 \cdot n$ of base pairs over all secondary structures,
the latter which follows from Theorem 4.19 of \cite{hofacker98}.
This may provide a theoretical
explanation for the speed-up observed for Vienna RNA Package when restricted
to canonical structures \cite{Bompfunewerer.jmb08}. 

Additionally, we computed the asymptotic number 
$1.07427\cdot n^{-3/2} \cdot 2.35467^n$ of
saturated structures, the expected number $0.337361 \cdot n$
of base pairs of saturated
structures and the asymptotic number 
$0.323954 \cdot 1.69562^n$ of saturated stem-loop structures.
We then considered a natural stochastic greedy process to generate
quasi-random saturated structures, and 
showed surprisingly that the expected number
of base pairs of is $0.340633 \cdot n$, a value very close to the
expected number $0.337361 \cdot n$ of base pairs of all saturated
structures. Finally, we apply a theorem of Drmota \cite{drmota} to
show that the density of states for [all resp. canonical resp. 
saturated] secondary structures is asymptotically Gaussian.

\section*{Acknowledgements}
We would like to thank Yann Ponty, for suggesting that Drmota's work
can be used to prove that the density of states for secondary structures
is Gaussian. Thanks as well to two anonymous referees, whose comments
led to important improvements in this paper.
Figure \ref{fig:FlajoletTriangle} is due to W.A. Lorenz, and first 
appeared in the joint article Lorenz et al. \cite{lorenzPontyClote:asymptotics}.

Funding for the research of 
P. Clote was generously provided by the
Foundation Digiteo - Triangle de la
Physique and the National Science Foundation
DBI-0543506 and DMS-0817971. Additional support is
gratefully acknowledged to the
Deutscher Akademischer Austauschdienst for a visit to
Martin Vingron's group in the Max Planck Institute of
Molecular Genetics.
Any opinions, findings,
and conclusions or recommendations expressed in this material are
those of the authors and do not necessarily reflect the views of the
National Science Foundation.
Funding for the research of 
E. Kranakis was generously provided by the
Natural Sciences and Engineering Research Council
of Canada (NSERC) and Mathematics of Information Technology and Complex
Systems (MITACS).
Funding for the research of B. Salvy was provided by
Microsoft Research-Inria Joint Centre.

\bibliographystyle{plain}

\end{document}